\newcommand{\imag}{\mathrm{i}}
\DeclareMathOperator{\TVD}{TVD}
\DeclareMathOperator{\poly}{poly}
\newtheorem{theorem}{Theorem}
\newtheorem{lemma}{Lemma}
\newtheorem{definition}{Definition}
\newtheorem{conjecture}{Conjecture}
\crefname{protocol}{Protocol}{Protocols}
\Crefname{protocol}{Protocol}{Protocols}
\newlength{\protowidth}
\newcommand{\pprotocol}[5]{
{\begin{figure*}[#4]
\renewcommand\figurename{Protocol}
\renewcommand{\thefigure}{1}
\begin{center}
\setlength{\protowidth}{\textwidth}
        {
        \hrulefill \vspace{5pt}
        \small
        {\quad
        \begin{minipage}{\protowidth}
        \begin{center}
        {\bf #1}
        \end{center}
        #5

        \hrulefill

        \end{minipage}
        \quad}
        }

        \caption{\label[protocol]{#3} #2}
\end{center}
\vspace{-4ex}
\end{figure*}
} }
\newcommand{\protocol}[4]{
\pprotocol{#1}{#2}{#3}{tbh!}{#4} }
\begin{document}
\title{Efficiently verifiable quantum advantage on near-term analog quantum simulators}
\author{Zhenning~Liu$^{1,2,3}$, Dhruv~Devulapalli$^{1,2}$, Dominik~Hangleiter$^{1}$, Yi-Kai~Liu$^{1,4}$, Alicia~J.~Kollár$^{2,5}$, Alexey~V.~Gorshkov$^{1,2}$, and Andrew~M.~Childs$^{1,3,6}$  \\
\footnotesize $^{1}$ Joint Center for Quantum Information and Computer Science, NIST/University of Maryland, College Park,
Maryland 20742, USA\\
\footnotesize $^{2}$ Joint Quantum Institute, NIST/University of Maryland, College Park, Maryland 20742, USA\\
\footnotesize $^{3}$ Department of Computer Science, University of Maryland, College Park, Maryland 20742, USA\\
\footnotesize $^{4}$ Applied and Computational Mathematics Division,
National Institute of Standards and Technology (NIST), Gaithersburg, Maryland 20899, USA\\
\footnotesize $^{5}$ Department of Physics, University of Maryland, College Park, Maryland 20742, USA\\
\footnotesize $^{6}$ Institute for Advanced Computer Studies, University of Maryland, College Park, Maryland 20742, USA\\
}

\begin{abstract}
Existing schemes for demonstrating quantum computational advantage are subject to various practical restrictions, including the hardness of verification and challenges in experimental implementation. Meanwhile, analog quantum simulators have been realized in many experiments to study novel physics. In this work, we propose a quantum advantage protocol based on \emph{single-step Feynman-Kitaev} verification of an analog quantum simulation, in which the verifier need only run an $O(\lambda^2)$-time classical computation, and the prover need only prepare $O(1)$ samples of a history state and perform $O(\lambda^2)$ single-qubit measurements, for a security parameter $\lambda$. We also propose a near-term feasible strategy for honest provers and discuss potential experimental realizations.
\end{abstract}

\maketitle

\section{Introduction}
\subsection{Background \& Motivation}
Quantum computers offer the promise of executing some computational tasks exponentially faster than classical computers. This suggests a violation of the extended Church-Turing thesis, which says that any physically realizable model of computation can be efficiently simulated by a classical Turing machine. Indeed, quantum computers were originally proposed as a means of simulating quantum mechanical systems \cite{Feynman_1982}, a task considered classically hard. There has been much progress toward identifying classically difficult problems that quantum computers can solve efficiently, such as integer factorization \cite{shor1997factoring}, simulating Hamiltonian dynamics \cite{Lloyd_1996, childs2018toward,berry2007efficient}, and extracting information about 
solutions of high-dimensional linear systems \cite{harrow2009quantum}.

A significant milestone for the field of quantum computing is the first demonstration that a quantum device can perform computational tasks that a classical device with comparable resources cannot. This milestone has been called quantum supremacy \cite{arute2019quantum, preskill2012quantum}, quantum advantage, or a proof of quantumness \cite{brakerski2021cryptographic}, and has instigated numerous theoretical proposals and experimental efforts. However, there remain formidable technological challenges to building quantum computers, requiring both theoretical and experimental progress in architecture design, fault tolerance, and control. Various proposals for quantum advantage have addressed these challenges in different ways, by making trade-offs between ease of experimental demonstration, ease of verification, security guarantees, and practical applications.

\emph{Analog} quantum simulation \cite{Georgescu_Ashhab_Nori_2014}, i.e., using one many-body quantum system to simulate another, is a natural approach to demonstrating quantum advantage. By building quantum systems with tunable (but perhaps non-universal) Hamiltonians, one can emulate a large class of Hamiltonians that may be difficult to simulate classically. Since it directly encodes hard problems into controllable quantum systems, analog simulation arguably mitigates many of the issues faced by digital approaches \cite{blatt2012quantum,flannigan2022propagation}. Furthermore, analog simulation avoids Trotter error and other sources of algorithmic error in digital quantum simulation \cite{childs2021theory, cubitt2018universal}. Indeed, analog simulations of systems with hundreds of qubits have already been performed \cite{ebadi2021quantum}.

A major challenge for both quantum simulation and more general forms of quantum computation is the difficulty of verifying the correctness of a quantum process.
There have been several proposals to verify digital quantum computation \cite{fitzsimons2018post, mahadev2018classical} based on the \emph{Feynman-Kitaev circuit-to-Hamiltonian mapping} \cite{kitaev2002classical}, but such protocols are neither designed for analog quantum simulation nor practical on near-term analog quantum devices. Previous work towards verifying analog simulation has suggested approaches such as cross-platform verification \cite{elben2020crossplatform, carrasco2021theoretical}, Hamiltonian learning \cite{carrasco2021theoretical}, and performing a Loschmidt echo \cite{cirac2012goals, carrasco2021theoretical, shaffer2021practical}. Unlike protocols for digital verification, these approaches can be spoofed by dishonest or inaccurate quantum simulators, and therefore cannot be used to demonstrate quantum advantage in a sound, efficiently verifiable way. A step toward verified analog simulation is made in  \cite{hangleiter_direct_2017}, in which the verifier measures the energy of a \emph{parent Hamiltonian} of the output state of analog quantum simulation.
However, all these works require a significant number of samples of the simulator's state to certify it.

\subsection{Our Contribution}
In this paper, by combining a \emph{single-step} Feynman-Kitaev encoding and the scheme of \textcite{bermejo2018architectures}, we propose a novel quantum advantage protocol with reduced resource requirements, where a verifier capable of polynomial-time classical computation can verify the result by asking the prover to perform \emph{trusted measurements} on a \emph{constant} number of copies of a state. We also present a strategy for the honest prover and argue that it is feasible on near-term devices.

\textbf{The Protocol.} Our protocol involves interaction between a polynomial-time classical \emph{verifier}, and a quantum \emph{prover} who can do polynomial-time quantum computation, although we present a strategy for an honest prover who must only perform analog quantum simulation and some limited additional operations. In our protocol, the prover is capable of single-qubit \emph{trusted measurements}, which means that he performs the correct single-qubit measurements as instructed by the verifier with error rate $\epsilon = O(1/n$) (with $n$ the number of qubits), and reports the outcome honestly. We also allow a polynomial amount of classical communication in both directions.

Our protocol still works without the assumption of trusted measurements if the prover can send polynomial-size quantum states to the verifier, and the verifier can perform single-qubit measurements, as in the notion of a non-interactive $\mathsf{QPIP}_1$ protocol defined by \textcite{aharonov2017interactive} (where $\mathsf{QPIP}$ stands for \emph{quantum prover interactive proof}).

\begin{definition}[$\mathsf{QPIP}_k$ protocol (simplified)]
An interactive proof for a language $\mathcal{L}$ is said to be $\mathsf{QPIP}_k$ if the prover is a $\mathsf{BQP}$ machine, the verifier is a hybrid $\mathsf{BQP}$-$\mathsf{BPP}$ machine that can process at most $k$ qubits at a time, and quantum states of $k$ qubits can be transmitted from the prover to the verifier.
\end{definition}
However, as reliably sending quantum states is unlikely to be feasible in the near term, we focus on the former model.

\textbf{Prover's Model of Computation.} We also give an experimentally practical strategy for honest provers. The strategy is specifically designed for near-term machines that are not capable of fully digital quantum computation, but are slightly more powerful than \emph{analog} quantum simulation, a popular notion that is often not clearly defined. In our work, we define a \emph{mostly analog} model of computation, its \emph{commuting} version, and its extension with a global $CZ$ gate, which we argue are feasible for near-term experiments.
\begin{definition}[Mostly analog quantum computation]
\label{defn:mostly}
    A model of quantum computation involving $n$ qubits is called \emph{mostly analog} if all the following conditions hold. (1) The system can evolve under a \emph{time-independent} 2-body Hamiltonian $H$ containing $\poly(n)$ Pauli terms for time $T=\poly(n)$. 
    (2) $O(1)$ alternations between the evolution under $H$ and single-qubit rotations can be performed.
    (3) Measurements can only be performed once at the end of the whole process.
\end{definition}

Note that condition (2) distinguishes this model from common notions of analog quantum computation, as it may require a degree of control not always available to analog quantum simulators.
Despite being mostly analog, the above model of computation is even capable of $\mathsf{BQP}$-complete quantum computations \cite{nagaj2012universal}. We introduce a weaker model where the 2-local Hamiltonian $H$ must also be \emph{commuting}, which means that all Pauli terms must commute with each other.
\begin{definition}[Mostly analog commuting quantum computation]
    A mostly analog model of computation is called \emph{commuting} if $H$ is a commuting Hamiltonian.
\end{definition}

Even a mostly analog commuting quantum device can solve some classically intractable problems \cite{bermejo2018architectures}. We focus on an even more restricted model that should be easier to realize, in which the Hamiltonian $H$ is a specific commuting Hamiltonian containing only nearest-neighbor $Z$ operators, as discussed further below.

We also assume the ability to perform a globally controlled $CZ$ gate.
This arguably makes our model less analog, but it plays a key role in developing a sample-efficient protocol to verify the solutions given by the device, and it can potentially be realized using
experimental capabilities that have already been demonstrated \cite{VuleticAllOpticalSwitch, WaksPhotonControlledSwitch}, as we discuss in \cref{sec:experiment}.

\begin{definition}[Mostly analog + $\mathrm{GCZ}$ commuting quantum computation]
\label{defn:mostlyG}
	A mostly analog commuting model of computation is called \emph{mostly analog + $\mathrm{GCZ}$} if the system also contains a quantum degree of freedom (e.g., a qubit) that can serve as a global control for all of the qubits, such that one can apply---only $O(1)$ times---a global $CZ$ gate that is controlled by the degree of freedom and acts on all of the qubits. Here $\mathrm{GCZ}$ stands for global $CZ$.
\end{definition}

\textbf{The Classically Hard Problem.} 
In the protocol, the verifier asks the prover to solve a classically hard problem based on Hamiltonian evolution. The prover generates a quantum state but is not trusted to do so correctly. However, the prover is trusted to honestly measure this state to generate a classical witness. 
The verifier checks this witness to determine if the problem has been successfully solved. If so, then quantum advantage has been demonstrated.

Instead of considering a general quantum circuit, we aim to demonstrate quantum advantage by verifying a specific analog quantum simulation performed on a mostly analog + $\mathrm{GCZ}$ commuting machine. The simulation is motivated by the class of $\mathsf{IQP}$ (instantaneous quantum polynomial-time) circuits \cite{shepherd2009temporally,bremner2011classical}, in which all quantum gates are commuting (and thus interchangeable in time). Despite this strong restriction, $\mathsf{IQP}$ circuits are believed to be hard to simulate classically \cite{bremner2011classical,bremner2016average}. Furthermore, \textcite{bermejo2018architectures} presented a concrete scheme to show quantum speedup on an analog simulator by running a specific unit-time Hamiltonian evolution. The Hamiltonian includes only nearest-neighbor $ZZ$ interactions and local $Z$ terms (a form that we call a \emph{($ZZ+Z$)-type Hamiltonian}) on a 2-dimensional square lattice: 
\begin{equation}
\label{eq:bermejo-vega-Hamiltonian}
    \sum_{\{i,j\} \in \mathrm{NN} } \frac{\pi}{4} Z_i Z_j - \sum_{i=1}^n \frac{\pi}{4} Z_i,
\end{equation}
where $\mathrm{NN}$ denotes the set of edges connecting nearest-neighbor qubits.
The qubits are randomly initialized in either $\smash{\frac{1}{\sqrt2}}(\ket{0}+\ket{1})$ or $\smash{\frac{1}{\sqrt2}}(\ket{0}+e^{-\imag \pi/4}\ket{1})$.
\textcite{bermejo2018architectures,ringbauer2023verifiable} prove that a classical computer cannot efficiently sample from the output distribution of $X$-basis measurements on the above system within total variation distance (TVD) $0.292$, under plausible computational assumptions that we review in \cref{supp:conjectures}.
Moreover, since single-qubit $Z_i$ operators commute with all $Z_i Z_j$ operators, one can absorb the single-qubit evolution $\exp(\imag \frac{\pi}{4} \sum_i Z_i)$ into the initial state of each qubit, so that the qubits are initialized in either $\frac{1}{2}\left[ (1+\imag)\ket{0} + (1-\imag)\ket{1} \right]$ or $\frac{1}{2}\left[ (1+\imag)\ket{0} + e^{-\imag \pi/4} (1-\imag)\ket{1} \right]$, which can be prepared by single-qubit operations. Then the Hamiltonian $H$ to be simulated contains only $ZZ$ interaction terms:
\begin{equation}
\label{eq:dominik_hamiltonian}
	H = \sum_{\{i,j\} \in \mathrm{NN} } \frac{\pi}{4} Z_i Z_j. 
\end{equation}

\begin{table*}[t]
\centering
\begingroup
\setlength{\tabcolsep}{3pt}
\renewcommand{\arraystretch}{1.75}
\begin{tabular}{ |c||c|c|c|c|  }
 \hline
Scheme & \makecell[c]{\# of\\Measurements} & \makecell[c]{Classical\\Verification}  & \makecell[c]{Requirements for\\ Honest Provers}&\makecell[c]{Requirements for\\ Verifiers} \\
 \hline\hline
Cryptographic PoQs \cite{brakerski2021cryptographic,kahanamoku2022classically}  &$\poly(\lambda)$  &$\poly(\lambda)$ & Digital & Purely Classical \\ \hline
Random Circuit Sampling \cite{arute2019quantum, zhu2022quantum} & $O(\lambda)$ & $\exp(\lambda)$ & Digital & Purely Classical \\ \hline
Parent Hamiltonians \cite{bermejo2018architectures} & $O(\lambda^6)$ & $O(\lambda^6)$ &Analog & Single-Qubit Measurements\\ \hline
This Work (State Transmission) & $O(\lambda^2)$ & $O(\lambda^2)$ &Mostly Analog + Global $CZ$ & Single-Qubit Measurements\\ \hline
This Work (Trusted Measurements) & $O(\lambda^2)$ & $O(\lambda^2)$ & \makecell[c]{Mostly Analog + Global $CZ$ \\ + Trusted Measurements} & Purely Classical\\
\hline
\end{tabular}
\endgroup
\caption{Comparison of demonstrations of quantum advantage. As discussed in the main text, $\lambda$ denotes the security parameter.
}
\label{tabl1}
\end{table*}

\textbf{Main Result.}
We now have all the building blocks to formalize the main result. In the state-transmission scenario, we have the following theorem.
\begin{theorem}[Main result---state-transmission version]
    There exists a classically intractable sampling problem that can be verified by a single-round $\mathsf{QPIP}_1$ protocol where the prover runs a specific mostly analog + $\mathrm{GCZ}$ commuting quantum task $O(1)$ times.
\end{theorem}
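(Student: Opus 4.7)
The plan is to instantiate the classically hard sampling problem as the Bermejo-Vega IQP task of \cref{eq:dominik_hamiltonian}, whose $X$-basis output distribution cannot be classically sampled to TVD below $0.292$, and to verify it via a single-step Feynman-Kitaev encoding that uses a single clock qubit. Writing $U=\exp(-\imag H)$ and $\ket{\psi_0}$ for the prescribed random product initial state, I would work with the history state
\[
\ket{\eta} \;=\; \tfrac{1}{\sqrt{2}}\bigl(\ket{0}_c\ket{\psi_0} + \ket{1}_c U\ket{\psi_0}\bigr)
\]
on $n+1$ qubits. Because $H$ is diagonal in the computational basis and each factor $\exp(-\imag(\pi/4)Z_iZ_j)$ of $U$ is, up to single-qubit Cliffords, equivalent to a nearest-neighbor $CZ_{ij}$, $U$ is itself Clifford, so $\ket{\eta}$ is a stabilizer state whose stabilizers all have weight $O(1)$, acting only on the clock and a few neighboring system qubits. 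I would take $H_{\mathrm{FK}}$ to be the equal-weight sum of the resulting $O(n)$ local stabilizer projectors, so that $\ket{\eta}$ is the unique zero-energy ground state.

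The verifier then samples one random stabilizer per measurement, sends the prover the corresponding single-qubit measurement bases (after an appropriate local Clifford rotation), and aggregates $O(\lambda^2)$ outcomes via a Chernoff bound to decide whether the estimated $\langle H_{\mathrm{FK}}\rangle$ lies below a chosen constant threshold. In the state-transmission version of the theorem, the prover forwards the relevant qubits one at a time through the $\mathsf{QPIP}_1$ channel and the verifier performs the measurements herself; only $O(1)$ copies of $\ket{\eta}$ are ever needed because each copy supports many compatible, commuting single-qubit measurements whose outcomes are jointly well-defined.

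For the honest prover inside the mostly analog + $\mathrm{GCZ}$ commuting model, I would realize $\ket{\eta}$ as follows. Initialize the $n$ system qubits into $\ket{\psi_0}$ by single-qubit rotations, absorbing the on-site Clifford factors arising from the reduction $\exp(-\imag(\pi/4)Z_iZ_j)\propto CZ_{ij}$ into those rotations. Initialize the clock into $\ket{+}_c$ and apply $\mathrm{GCZ}$ once with $c$ as the control; by \cref{defn:mostlyG} this implements controlled-$\prod_{\{i,j\}\in\mathrm{NN}}CZ_{ij}$, which is exactly the controlled-$U$ needed to produce $\ket{\eta}$ up to trivial single-qubit phases that can be removed by a final round of single-qubit rotations. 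The entire preparation uses only $O(1)$ applications of $\mathrm{GCZ}$ together with single-qubit operations, consistent with \cref{defn:mostly} and \cref{defn:mostlyG}, and it is repeated $O(1)$ times to supply the verifier with enough copies.

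The main obstacle is the soundness analysis: I must show that any prover strategy that causes the verifier to accept must produce a state trace-close to $\ket{\eta}$, propagate that closeness to the marginal $X$-basis distribution on the system qubits conditioned on $c=1$, and choose the acceptance threshold so that the induced TVD from the honest Bermejo-Vega distribution stays strictly below $0.292$, triggering the classical hardness conjecture. The constant spectral gap of single-step, stabilizer-type Feynman-Kitaev Hamiltonians and the $O(\sqrt{\epsilon})$ conversion from energy to trace distance are both fairly standard, but simultaneously balancing the threshold, the gap, the Chernoff precision, and the $0.292$ tolerance so that $O(\lambda^2)$ single-qubit measurements and $O(1)$ copies of $\ket{\eta}$ suffice is the delicate part.
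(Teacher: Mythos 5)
Your high-level skeleton (the Bermejo-Vega task, a single clock qubit, a single-step history state, constant sample complexity) matches the paper, but two of your central technical claims fail, and they are exactly the points where the paper has to do something different. First, the history state is not a stabilizer state with weight-$O(1)$ stabilizers. The input qubits prepared in $\ket{y}\propto\ket{0}+e^{-3\imag\pi/4}\ket{1}$ are magic states, and, more fundamentally, even though $U$ is Clifford, controlled-$U$ is not: conjugating the stabilizer $X_c$ of $\ket{+}_c\ket{\psi_0}$ by controlled-$U$ yields the propagation check $\ket{1}\bra{0}\otimes U+\ket{0}\bra{1}\otimes U^\dagger$, an operator of weight $n+1$ whose Pauli decomposition has exponentially many terms with coefficients $2^{-m/2}$. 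No local Clifford rotation turns this into a single-qubit measurement pattern, so your verifier cannot ``sample one random stabilizer'' and measure it qubit-by-qubit. The paper's replacement for this step is the \emph{de facto} measurement of $O_{10}=\ket{1}\bra{0}\otimes U=\frac12(X-\imag Y)\otimes U$ (\cref{lemma:mmtsufficient}): because the $Z_iZ_j$ factors of $U$ commute and are simultaneously diagonal, a single run measuring the clock in $X$ or $Y$ and all system qubits in $Z$ yields a complex number $u=\prod_k\bigl(\cos\tfrac{\pi}{4}-\imag\sin\tfrac{\pi}{4}h_k\bigr)$ whose average recovers $\Tr[\rho O_{10}]$; this is then fed into the Cauchy--Schwarz bound $F_\mathrm{out}\geq 16|{\Tr[\rho O_{10}]}|^2+3F_\mathrm{in}-6$ (\cref{thm:lowerbounddistance}) rather than into an energy/spectral-gap argument.

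Second, your honest-prover preparation does not work. The $\mathrm{GCZ}$ of \cref{defn:mostlyG} applies a clock-controlled $Z$ to each system qubit, i.e.\ $\ket{0}\bra{0}\otimes I+\ket{1}\bra{1}\otimes\prod_iZ_i$; it does not implement clock-controlled $CZ_{ij}$ between pairs of system qubits (that would be a collection of $CCZ$ gates), and the residual ``single-qubit phases'' in the identity $e^{-\imag\pi Z_iZ_j/4}\propto CZ_{ij}\,e^{\imag\pi Z_i/4}e^{\imag\pi Z_j/4}$ become clock-controlled phase gates, not removable by final local rotations. So a single $\mathrm{GCZ}$ cannot produce controlled-$U$. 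The paper instead uses an echo: starting from $(\ket{0}+\ket{1})\ket{\phi_\mathrm{in}}$, it conjugates a half-time evolution $e^{-\imag HT/2}$ by $\mathrm{CNOT}_B$ (realized as $\mathrm{H}_B\cdot\mathrm{GCZ}\cdot\mathrm{H}_B$ on one side $B$ of the bipartition of \cref{fig:square}), which flips the sign of $H$ on one branch because every $Z_iZ_j$ term has exactly one leg in $B$, and then evolves for another $T/2$. You would need to import both of these ideas for your proof to go through.
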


In the trusted-measurement scenario, our result is as follows.

\begin{theorem}[Main result---trusted-measurement version]
    There exists a classically intractable sampling problem that can be verified by a single-round protocol where the classical verifier trusts the prover to perform single-qubit measurements, and the prover runs a specific mostly analog + $\mathrm{GCZ}$ commuting quantum task $O(1)$ times.
\end{theorem}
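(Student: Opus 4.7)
The plan is to invoke classical hardness of the Bermejo-Vega sampling problem of \cref{eq:dominik_hamiltonian} from \cref{supp:conjectures}, and then to design a single-step Feynman--Kitaev verifier on top of it. The verifier asks the prover to prepare a constant number of copies of the single-step history state
\begin{equation*}
|\psi_h\rangle = \tfrac{1}{\sqrt 2}\bigl(|0\rangle_c|\psi_0\rangle + |1\rangle_c\,U|\psi_0\rangle\bigr),\qquad U = e^{-\imag H},
\end{equation*}
with $|\psi_0\rangle$ the prescribed random product input and $H$ as in \cref{eq:dominik_hamiltonian}, and to report the outcomes of specific single-qubit Pauli measurements. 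Those outcomes are used to estimate $\langle\psi_h|H_{\text{FK}}|\psi_h\rangle$, where $H_{\text{FK}}=H_{\text{in}}+H_{\text{prop}}+H_{\text{out}}$ is the standard single-step Feynman--Kitaev Hamiltonian whose unique zero-energy ground state is $|\psi_h\rangle$ restricted to the subspace consistent with the prover's reported $X$-basis output string $y$.

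To make the measurements single-qubit, I would expand $H_{\text{FK}}$ as a sum of tensor products $P_c\otimes Q_1\otimes\cdots\otimes Q_n$ of single-qubit operators, and let the verifier sample a random term (weighted by coefficient magnitude) to be measured on each copy of the history state. The input and output terms are sums of single-qubit projectors on the data since $|\psi_0\rangle$ is a product state; the propagation term $\tfrac12 I-\tfrac12 X_c\otimes\cos H-\tfrac12 Y_c\otimes\sin H$ is more delicate, but because $H$ is a sum of commuting $\tfrac{\pi}{4}Z_iZ_j$ terms, each $\cos H$ and $\sin H$ can be sampled from as a signed product of $Z_iZ_j$'s by expanding $e^{-\imag H}=\prod_{\{i,j\}}\tfrac{1}{\sqrt 2}(I-\imag Z_iZ_j)$; the verifier classically picks one such product and instructs the corresponding single-qubit $Z$ measurements on the data, together with an $X$ or $Y$ measurement on the clock. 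A Hoeffding bound then shows that $\Theta(\lambda^2)$ such random single-qubit Pauli measurements suffice to estimate $\langle H_{\text{FK}}\rangle$ to additive precision $1/\lambda$ in $O(\lambda^2)$ classical time, matching the sample and time complexity claimed in \cref{tabl1}.

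For the honest prover, the plan is to prepare $|\psi_h\rangle$ as $\mathrm{c}\text{-}U\,(|+\rangle_c\otimes|\psi_0\rangle)$, where $\mathrm{c}\text{-}U=e^{-\imag|1\rangle\!\langle 1|_c\otimes H}$ is controlled-$U$. Since $H$ is diagonal and commutes with $\mathrm{GCZ}$, I would synthesize $\mathrm{c}\text{-}U$ by sandwiching one unit-time analog evolution under $H$ between $O(1)$ layers of $\mathrm{GCZ}$ and single-qubit rotations on the data, chosen so that the $|1\rangle_c$ branch acquires the full phase $e^{-\imag\phi(x)}$ while the $|0\rangle_c$ branch is left unphased. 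This uses $O(1)$ alternations of analog evolution, single-qubit gates, and $\mathrm{GCZ}$, and therefore fits \cref{defn:mostlyG}. Completeness then follows from $\langle\psi_h|H_{\text{FK}}|\psi_h\rangle=0$; soundness uses the standard Feynman--Kitaev spectral-gap argument, so any state with $\langle H_{\text{FK}}\rangle\le\varepsilon$ has fidelity $1-O(\varepsilon)$ with $|\psi_h\rangle$ and hence yields reported $X$-basis outcomes within TVD $O(\sqrt\varepsilon)$ of the Bermejo-Vega distribution, contradicting classical hardness once $\varepsilon$ is pushed below the $0.292$ threshold.

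The hardest step will be the honest-prover synthesis of $\mathrm{c}\text{-}U$ from only analog evolution, single-qubit gates, and $O(1)$ $\mathrm{GCZ}$ calls: the natural decomposition $|1\rangle\!\langle 1|_c\otimes H=\tfrac12(I_c\otimes H-Z_c\otimes H)$ asks for three-body $Z_cZ_iZ_j$ evolutions that are not native primitives of the model. I expect the resolution to exploit the commuting structure of $H$ together with the fact that $\mathrm{GCZ}$ inserts $\prod_iZ_i$ controlled on $|1\rangle_c$, so that conjugating $H$ by $\mathrm{GCZ}$ with Hadamards on a selected sublattice of qubits can produce the desired clock-dependent sign flip; the technical crux will be verifying that this sandwich collapses exactly to $\mathrm{c}\text{-}U$ up to a single-qubit correction on the clock, which can then be absorbed into $H_{\text{prop}}$ by the verifier.
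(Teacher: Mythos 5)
Your overall architecture matches the paper's: the single-step history state for the Bermejo--Vega Hamiltonian, hardness from \cref{supp:conjectures}, and an echo-style synthesis of the controlled evolution using $\mathrm{GCZ}$ conjugated by Hadamards on one sublattice (your closing guess about the clock-dependent sign flip is exactly the paper's $\mathrm{CNOT}_B$ echo in \cref{sec:echo_trick}). However, there is a genuine gap in your measurement scheme for the propagation term. You propose to expand $e^{-\imag H}=\prod_{\{i,j\}}\tfrac{1}{\sqrt 2}(I-\imag Z_iZ_j)$ into Pauli strings and have the verifier ``pick one such product'' at random, weighted by coefficient magnitude. That expansion has $2^m$ terms (with $m=\Theta(n)$ edges), each of magnitude $2^{-m/2}$, so the $1$-norm of the coefficients is $2^{m/2}$; the resulting importance-sampling estimator takes values of magnitude $2^{m/2}$ and Hoeffding then demands $2^{\Omega(n)}$ samples, not $\Theta(\lambda^2)$. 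The paper avoids this entirely by exploiting that all $Z_iZ_j$ terms are \emph{simultaneously} measurable: a single $Z$-basis measurement of all data qubits yields every $h_k$ at once, and the verifier classically evaluates the full product $u=\prod_k\bigl(\cos(\pi/4)-\imag\sin(\pi/4)h_k\bigr)$, which has unit modulus and is an unbiased estimator of $\Tr[\rho\,(\ket{1}\bra{0}\otimes U)]$ when combined with an $X$- or $Y$-basis clock measurement (\cref{lemma:mmtsufficient}). If by ``one such product'' you meant the single full product evaluated from one global $Z$-basis shot, you are fine, but as written the term-sampling reading is the natural one and it fails.

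Two secondary deviations are worth noting. First, the paper deliberately does \emph{not} estimate $\langle H^\mathrm{prop}\rangle$ (your $H_\mathrm{FK}$ energy), because that only accesses $\Re\Tr[\rho O_{10}]$ and therefore rejects history states $\tfrac{1}{\sqrt2}(\ket{0}\ket{\phi_\mathrm{in}}+e^{\imag\theta}\ket{1}U\ket{\phi_\mathrm{in}})$ with a clock phase $\theta\neq 0$, even though these sample correctly; the paper instead certifies $|\Tr[\rho O_{10}]|^2$, which equals $1/4$ for all $\theta$ and makes the protocol robust to this common error. Second, your $H_\mathrm{out}$ term conditioned on the reported string $y$ has no analogue here---for a sampling problem there is no accepting output to penalize, and the paper uses only $H^\mathrm{in}$ and $H^\mathrm{prop}$ analogues. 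Finally, your soundness step ``fidelity $1-O(\varepsilon)$ with $\ket{\psi_h}$ implies TVD $O(\sqrt{\varepsilon})$'' glosses over the conditioning on the clock outcome: the hardness threshold is the explicit constant $0.292$, so one must track how history-state quality translates into the \emph{conditional} output fidelity $F_\mathrm{output}$, which is exactly the content of \cref{thm:lowerbounddistance} and \cref{supp:relate_yikai}.
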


Our protocol has constant \emph{sample complexity}, i.e., it only requires the prover to generate $O(1)$ samples of an $n$-qubit state.
This is significantly less expensive than \textcite{bermejo2018architectures}, which uses $O(n^2)$ samples.

In both this work and Ref.~\cite{bermejo2018architectures}, the prover is expected to perform trusted measurements (or the prover sends qubits to the verifier for her to measure), unlike proofs of quantumness (PoQs) based on trapdoor claw-free functions (TCFs) \cite{brakerski2021cryptographic,kahanamoku2022classically} and quantum supremacy experiments \cite{arute2019quantum, preskill2012quantum} based on sampling problems, which makes it difficult to compare the resource requirements. 
However, in all of these schemes, single-qubit measurements must be performed many times, either by the prover or the verifier. Hence the number of qubits measured is a comparable quantity.

Equivalently, without transforming the protocols, we can still compare the number of measurements in terms of the \emph{security parameter}, whether the measurements are trusted or not.
The security parameter $\lambda$ is defined such that a dishonest prover without quantum computational power needs time $2^{\Omega(\lambda)}$ in order to make the verifier accept.
For our protocol, the number of qubits $n$ is quadratic in $\lambda$, just as in \textcite{bermejo2018architectures}. 
Under optimistic assumptions, cryptographic PoQs can probably have $n=O(\lambda)$ \cite{kahanamoku2022classically}, but for 
most common TCFs, $n$ scales at least quadratically with $\lambda$ \cite{brakerski2021cryptographic}. Since it has constant sample complexity, our protocol uses $O(\lambda^2)$ single-qubit measurements. This is
better than \textcite{bermejo2018architectures}, which uses $O(\lambda^3)$ measurements.
Furthermore, our protocol can be verified by $O(\lambda^2)$-time classical computation, significantly below the verification cost of $O(\lambda^6)$ for \textcite{bermejo2018architectures} and presumably $\exp(\lambda)$ for quantum supremacy experiments based on sampling problems \cite{arute2019quantum, boixo2018characterizing, dalzell2020many}.

We summarize the comparison between our work and other quantum advantage protocols in \cref{tabl1}.

On the prover side, TCF-based PoQs generally require $\mathrm{poly}(\lambda)$-depth low-noise digital quantum computation, while our honest strategy is designed for analog quantum simulators with only limited digital capabilities. This may be harder than fully analog simulation \cite{bermejo2018architectures, arute2019quantum, zhong2020quantum, wu2021strong}, but should still be feasible in the relatively near term. Moreover, our protocol can detect---and is robust against---a specific type of phase error that happens frequently in practice. Thus we believe our work achieves a significant improvement in terms of verification efficiency for verified quantum advantage protocols, and is an easier-to-implement scheme. We provide exact threshold fidelities (independent of the system size) for the device to demonstrate quantum advantage using our scheme. We also show that when the noise is incoherent, the fidelity requirements can be further relaxed.

The remainder of this paper is organized as follows.
In \cref{sec:qaprotocol}, we describe the sample-efficient quantum advantage protocol and analyze its resource requirements. In \cref{sec:honest}, we give the near-term strategy for honest provers and discuss potential experimental realizations. Finally, we summarize the results and discuss their implications and potential future extensions in \cref{sec:summary}.

\section{The Quantum Advantage Protocol}
\label{sec:qaprotocol}
\subsection{The Single-Step Feynman-Kitaev Construction}

Our protocol is inspired by the Feynman-Kitaev mapping \cite{kitaev2002classical}, which converts the task of executing a quantum circuit to that of finding the ground state of an associated Hamiltonian. The Feynman-Kitaev Hamiltonian is the foundation of several verification schemes in the circuit model: if a quantum server can provide the ground state (the witness) to the client, then the client can verify the quantum computation by measuring its energy. Examples include the \textcite{fitzsimons2018post} protocol where the prover needs to perform single-qubit trusted measurements, and the \textcite{mahadev2018classical} protocol that works even for untrusted measurements.

Inspired by the above protocols for circuit-model computations, we consider using a simplified Feynman-Kitaev mapping to verify analog quantum simulation of the system in \cite{bermejo2018architectures}, i.e., the Hamiltonian $H$ in \cref{eq:dominik_hamiltonian}.

We define the (single-step) \emph{history state}
\begin{equation}
\label{eq:single_original}
    \ket{\psi_\mathrm{hist}} = \frac{1}{\sqrt{2}} \left( \ket{0} \ket{\phi_\mathrm{in}} + \ket{1} U \ket{\phi_\mathrm{in}}\right),
\end{equation}
where $\ket{\phi_\mathrm{in}}$ is the input state and $U$ is the \emph{propagation unitary}. 
The state $\ket{\psi_\mathrm{hist}}$ is the ground state of the single-step Feynman-Kitaev Hamiltonian. Since we are considering quantum simulation of the $ZZ$-type Hamiltonian $H$ defined in \cref{eq:dominik_hamiltonian}, we have $U = \exp(-\imag H T)$ with $T=1$, and $\ket{\phi_\mathrm{in}}$ is the same random input state defined in the system of \textcite{bermejo2018architectures} with single-qubit $Z$ evolution absorbed. The computationally hard sampling problem can be solved by measuring $U\ket{\phi_\mathrm{in}}$ in the $X$ basis. We use $P_\mathrm{ideal}$ to denote the ideal distribution of measurement outcomes.

The Feynman-Kitaev Hamiltonian includes a term
\begin{equation}
    H^\mathrm{prop} = \frac{1}{2} \left(I \otimes I - \ket{1}\bra{0} \otimes U- \ket{0}\bra{1} \otimes U^\dagger \right),
\end{equation}
which ensures that the ground state encodes the correct propagation unitary $U$.
One can easily check that $H^\mathrm{prop}$ is positive semidefinite and $H^\mathrm{prop}\ket{\psi_\mathrm{hist}}=0$, so $\ket{\psi_\mathrm{hist}}$ is a ground state of $H^\mathrm{prop}$.

The other term of the Feynman-Kitaev Hamiltonian is
\begin{equation}
    H^\mathrm{in}=\ket{0}\bra{0} \otimes  \left(I- \sum_i\ket{\phi_{\mathrm{in},i}}\bra{\phi_{\mathrm{in},i}} \right),
\end{equation}
where $\ket{\phi_{\mathrm{in},i}}$ is the state of the $i$th qubit of $\ket{\phi_\mathrm{in}}$. $H^\mathrm{in}$ ensures that the input state is $\ket{\phi_\mathrm{in}}$. It is also positive semidefinite and satisfies $H^\mathrm{in}\ket{\psi_\mathrm{hist}}=0$.

A toy version of our protocol for demonstrating quantum advantage, without any technical detail, is as follows. The verifier sends classical descriptions of $H$ and $\ket{\phi_\mathrm{in}}$ to the prover, and asks the prover to prepare $N_\mathrm{M}$ copies of the history state $\smash{\frac{1}{\sqrt2}}( \ket{0}\ket{\phi_\mathrm{in}} + \ket{1}U\ket{\phi_\mathrm{in}})$.
For each copy, the verifier chooses whether to generate a sample or to verify the state, with equal probability. If she chooses to sample, then she asks the prover to measure the first qubit (i.e., the clock qubit) in the $Z$ basis and all other qubits in the $X$ basis, and a sample is generated if the first measurement outcome is $-1$ (i.e., the clock qubit is in $\ket{1}$). If the verifier chooses to verify, then she measures the energy of $H^\mathrm{prop}+H^\mathrm{in}$ by quantum phase estimation. Finally, if every run of quantum phase estimation returns $0$, which means that the fidelity between the measured state and the perfect history state is very high (the infidelity is inverse exponential in $N_\mathrm{M}$ if $N_\mathrm{M}/2$ copies are chosen for verification) and therefore the measurement outcomes are close to the desired distribution $P_\mathrm{ideal}$, she accepts and announces all of the samples obtained. Otherwise, she rejects.

One disadvantage of the verification part of this scheme is that it can only accept devices that provide history states with exponentially small infidelity. 
While near-term devices will be imperfect, they might still be able to sample from classically intractable distributions. Also, experimentalists may prefer to know how well their devices are performing and whether they are making progress, but a ``yes or no" result cannot provide this kind of information. Finally, measurements of $H^\mathrm{prop}+H^\mathrm{in}$ might be difficult, potentially requiring many measurements to determine the energy with sufficiently high precision, and quantum phase estimation is not feasible in the near term.

Therefore, inspired by the original single-step Feynman-Kitaev Hamiltonian, we propose a new verification scheme to replace the toy protocol. In the new scheme, different \emph{parameters} are measured to lower bound the total variation distance between the sampled distribution $P_\mathrm{exp}$ and the desired distribution $P_\mathrm{ideal}$, demonstrating quantum advantage according to \cite{bermejo2018architectures,ringbauer2023verifiable}. We also give an efficient near-term strategy for estimating those parameters.

\subsection{Our Measurement Scheme} 

To begin, consider an arbitrary $(n+1)$-qubit state
\begin{equation}\label{eqn-rho-diag}
    \rho = \sum_{i} p_i \ket{\psi_i}\bra{\psi_i},
\end{equation}
where $\{\ket{\psi_i}\}$ is the (unknown) eigenbasis of $\rho$, and $p_i$ is the probability corresponding to $\ket{\psi_i}$. We can write
\begin{equation}\label{eqn-psi-diag}
    \ket{\psi_i} = \alpha_i \ket{0} \ket{\phi_i} + \beta_i \ket{1} \ket{\phi'_i}
\end{equation}
where $\ket{\phi_i}$ and $\ket{\phi'_i}$ are $n$-qubit states and $\alpha_i,\beta_i \in \mathbb{C}$ satisfy $|\alpha_i|^2 + |\beta_i|^2 = 1$. Thus we can interpret $\rho$ as a classical mixture of states $\ket{\psi_i}$ as above with input states $\ket{\phi_i}$ and output states $\ket{\phi'_i}$.

The first parameter to be estimated in our scheme, the \emph{input fidelity}, is defined as
\begin{equation}
    F_\mathrm{in}(\rho) := \frac{ \sum_i p_i |\alpha_i|^2 \left| \langle \phi_i | \phi_\mathrm{in}  \rangle \right|^2} {\sum_i p_i |\alpha_i|^2}.
\end{equation}
This quantifies the quality of initial state preparation. It plays a similar role to $\langle H^\mathrm{in}\rangle$ in the single-step Feynman-Kitaev Hamiltonian.

Another parameter is the probability of obtaining a $-1$ outcome when measuring the clock qubit. We call this the \emph{probability of sampling}:
\begin{equation}
    p_\mathrm{samp} := \sum_i p_i |\beta_i|^2.
\end{equation}

The last parameter is the \emph{de facto} mean value of the non-Hermitian operator
\begin{equation}
O_{10}:=\ket{1}\bra{0}\otimes U,
\end{equation}
whose expectation value in the state $\rho$ is
\begin{equation}
        \Tr[\rho O_{10}] = \sum_i p_i \alpha_i \beta^*_i \langle \phi'_i|U|\phi_i\rangle.
\end{equation}
We mainly consider its squared norm, $|{\Tr[\rho O_{10}]}|^2$. This quantity is related to the quality of propagation from $\ket{\phi}$ to $U\ket{\phi}$, so it plays a similar role to $\langle H^\mathrm{prop} \rangle$ in the single-step Feynman-Kitaev Hamiltonian.

As we show in \cref{lemma:mmtsufficient_fin,lemma:mmtsufficient}, $F_\mathrm{in}(\rho)$, $p_\mathrm{samp}$, and $\Tr[\rho O_{10}]$ can all be estimated by single-qubit measurements, and the precision depends only on the number of samples measured, independent of the system size. Note that $O_{10}$ is not Hermitian, so it is in general not an observable, but its \emph{de facto} mean value (which is a complex number) can still be estimated. We discuss this in detail in the proof of \cref{lemma:mmtsufficient}.

We are interested in the \emph{output fidelity}, defined as
\begin{equation}
    F_\mathrm{output}:= \frac{\sum_i p_i |\beta_i|^2 |\langle \phi'_i|U|\phi_\mathrm{in} \rangle|^2}{\sum_i p_i |\beta_i|^2}.
\end{equation}
This quantifies the fidelity between the state being measured to generate samples from $P_\mathrm{exp}$ and the ideal state that can be measured to generate samples from $P_\mathrm{ideal}$, and thus can be directly related to the TVD between distributions, $\TVD(P_\mathrm{exp},P_\mathrm{ideal})$. In \cref{supp:relate_yikai}, we explicitly relate $F_\mathrm{output}$ and $\TVD(P_\mathrm{exp}, P_\mathrm{ideal})$, and find the threshold fidelity $0.915$ using the hardness result proved in \cite{ringbauer2023verifiable}, which gives a criterion for verified quantum advantage.

In \cref{supp:relate_yikai}, we also derive a lower bound for $F_\mathrm{output}$ in terms of $\epsilon := 1/4 - \Tr[\rho O_{10}]$, $\epsilon' := 1/2- p_\mathrm{samp}$, and $\epsilon'' := 1- F_\mathrm{in}(\rho)$, as follows.
\begin{theorem}[Lower bound on the output fidelity]
\label{thm:lowerbounddistance}
\begin{equation}
    F_\mathrm{output} \geq 1-16\epsilon - 3\epsilon''+\mathrm{h.o.}
\end{equation}
where $\mathrm{h.o.}$ indicates higher-order terms in $\epsilon,\epsilon',\epsilon''$.
\end{theorem}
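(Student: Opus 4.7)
The plan is to expand $\Tr[\rho O_{10}]$ in the eigenbasis of $\rho$ given in \eqref{eqn-rho-diag}--\eqref{eqn-psi-diag}, introduce $\ket{\phi_\mathrm{in}}$ into the expression by a parallel/perpendicular decomposition, and then apply Cauchy--Schwarz to relate the measured parameter $\Tr[\rho O_{10}]$ to $F_\mathrm{in}$, $F_\mathrm{output}$, and $p_\mathrm{samp}$. Inverting the resulting inequality and Taylor-expanding in the small quantities $\epsilon,\epsilon',\epsilon''$ will yield the claimed linearized bound.

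First, I would write $\Tr[\rho O_{10}] = \sum_i p_i\alpha_i\beta_i^*\langle\phi'_i|U|\phi_i\rangle$ and decompose each $\ket{\phi_i} = \langle\phi_\mathrm{in}|\phi_i\rangle\ket{\phi_\mathrm{in}} + \ket{\phi_i^\perp}$, where $\ket{\phi_i^\perp}$ has norm $\sqrt{1 - |\langle\phi_\mathrm{in}|\phi_i\rangle|^2}$. This gives $\Tr[\rho O_{10}] = A + B$, where $A := \sum_i p_i\alpha_i\beta_i^*\langle\phi_\mathrm{in}|\phi_i\rangle\langle\phi'_i|U|\phi_\mathrm{in}\rangle$ captures the ``ideal'' coupling between input and output states and $B$ collects the residual perpendicular contributions $\sum_i p_i\alpha_i\beta_i^*\langle\phi'_i|U|\phi_i^\perp\rangle$.

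Next, I would apply Cauchy--Schwarz twice. For the principal term, I obtain
\begin{equation}
    |A|^2 \leq \Bigl(\sum_i p_i|\alpha_i|^2|\langle\phi_\mathrm{in}|\phi_i\rangle|^2\Bigr)\Bigl(\sum_i p_i|\beta_i|^2|\langle\phi'_i|U|\phi_\mathrm{in}\rangle|^2\Bigr) = F_\mathrm{in}(1-p_\mathrm{samp}) \cdot F_\mathrm{output}\,p_\mathrm{samp},
\end{equation}
and for the residual
\begin{equation}
    |B|^2 \leq \Bigl(\sum_i p_i|\alpha_i|^2(1-|\langle\phi_\mathrm{in}|\phi_i\rangle|^2)\Bigr)\Bigl(\sum_i p_i|\beta_i|^2\Bigr) = (1-p_\mathrm{samp})\epsilon''\cdot p_\mathrm{samp}.
\end{equation}
A cleaner, coordinate-free derivation of the same inequalities proceeds by conjugating $\rho$ with $V = \ketbra{0}{0}\otimes I + \ketbra{1}{1}\otimes U^\dagger$ so that $VO_{10}V^\dagger = \ketbra{1}{0}\otimes I$ and the four relevant parameters become matrix elements of $\tilde\rho := V\rho V^\dagger$; the above bounds then follow from the positivity of $\tilde\rho$ and of $W\tilde\rho W$ with $W = \ketbra{0}{0}\otimes(I-\ketbra{\phi_\mathrm{in}}{\phi_\mathrm{in}}) + \ketbra{1}{1}\otimes I$.

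Finally, I would combine the two estimates via the triangle inequality $|\Tr[\rho O_{10}]|\leq |A|+|B|$, substitute $(1-p_\mathrm{samp})p_\mathrm{samp} = 1/4-\epsilon'^2$, and solve for $F_\mathrm{output}$, obtaining a bound of the schematic form
\begin{equation}
    F_\mathrm{output}\;\geq\;\frac{1}{1-\epsilon''}\left(\sqrt{\frac{1/4-\epsilon}{1/4-\epsilon'^2}}-\sqrt{\epsilon''}\right)^{\!2}.
\end{equation}
Expanding to first order in $\epsilon,\epsilon',\epsilon''$ and collecting cross terms into $\mathrm{h.o.}$ yields the stated linear inequality. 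The main obstacle is not conceptual but algebraic: getting the specific constants $16$ and $3$ requires tracking the quadratic corrections that arise from squaring, from inverting $(1-\epsilon'')$, and from the constraint $\epsilon\geq\epsilon'^2$ imposed by the positivity of the clock marginal, and it may require replacing the plain $|A|+|B|$ triangle bound with a weighted $(1+c)|A|^2+(1+1/c)|B|^2$ split whose parameter $c$ is tuned to absorb the $\sqrt{\epsilon''}$ correction into genuinely higher-order contributions.
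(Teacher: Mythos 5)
Your decomposition $\Tr[\rho O_{10}] = A + B$ via $\ket{\phi_i} = \braket{\phi_\mathrm{in}}{\phi_i}\ket{\phi_\mathrm{in}} + \ket{\phi_i^\perp}$ and the two Cauchy--Schwarz bounds on $|A|^2$ and $|B|^2$ are correct, and this is a genuinely different organization from the paper's proof (which instead bounds $\|\vec{f}_\mathrm{prop}\|^2 \geq 1-4\epsilon$ and $\|\vec{f}_\mathrm{in}\| \gtrsim (1-2\epsilon)F_\mathrm{in}$ separately and combines them through the pointwise estimate $|\bra{\phi'_i}U\ket{\phi_\mathrm{in}}|^2 \geq |\bra{\phi'_i}U\ket{\phi_i}|^2 + |\braket{\phi_i}{\phi_\mathrm{in}}|^2 - 1$ summed with weights $p_i|\beta_i|^2$). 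However, the final step has a genuine gap, not just unfinished bookkeeping. With $|B| \leq \tfrac{1}{2}\sqrt{\epsilon''}$ and the plain triangle inequality, your schematic bound expands to $F_\mathrm{output} \geq 1 - 4\epsilon - 2\sqrt{\epsilon''} + O(\epsilon'')$. The term $-2\sqrt{\epsilon''}$ is first order in $\sqrt{\epsilon''}$, not a higher-order term in $\epsilon''$, so this is a strictly weaker statement than the theorem: at the protocol's operating point $\epsilon'' \leq 0.012$ it contributes about $0.22$, versus $3\epsilon'' \approx 0.036$, and the $0.915$ threshold is lost entirely. Your proposed remedy also fails: from $(1+c)|A|^2 + (1+1/c)|B|^2 \geq |A+B|^2$ one gets $F_\mathrm{output} \geq 1 - \tfrac{c}{1+c} - 4\epsilon - \tfrac{\epsilon''}{c} + \cdots$, and optimizing over $c$ (at $c \sim \sqrt{\epsilon''}$) returns exactly the same $-2\sqrt{\epsilon''}$ deficit, while constant $c$ costs a constant factor. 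No choice of weights can rescue a linear-in-$\epsilon''$ bound from the estimates $|A+B| = |\Tr[\rho O_{10}]|$, $|A|^2 \lesssim \tfrac14 F_\mathrm{in}F_\mathrm{output}$, $|B|^2 \lesssim \tfrac14\epsilon''$ alone.

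The missing ingredient is a sharper bound on $B$ that couples it to the quantity being controlled. Since $\ket{\phi_\mathrm{in}}$ is orthogonal to $\ket{\phi_i^\perp}/\|\ket{\phi_i^\perp}\|$, you have $|\bra{\phi'_i}U\ket{\phi_\mathrm{in}}|^2 + |\bra{\phi'_i}U\ket{\phi_i^\perp}|^2/\|\ket{\phi_i^\perp}\|^2 \leq 1$, hence $|\bra{\phi'_i}U\ket{\phi_i^\perp}| \leq \|\ket{\phi_i^\perp}\|\,\bigl(1-|\bra{\phi'_i}U\ket{\phi_\mathrm{in}}|^2\bigr)^{1/2}$, and Cauchy--Schwarz then gives $|B|^2 \leq (1-p_\mathrm{samp})\epsilon'' \cdot p_\mathrm{samp}(1-F_\mathrm{output})$. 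Writing $\delta = 1-F_\mathrm{output}$ and setting $\epsilon'=0$ for clarity, the chain $|A| \geq |\Tr[\rho O_{10}]| - |B|$ becomes the self-consistent relation $(1-\epsilon'')(1-\delta) \geq 1 - 4\epsilon - 2\sqrt{\epsilon''\delta}$, and the AM--GM step $2\sqrt{\epsilon''\delta} \leq 2\epsilon'' + \delta/2$ closes it to $\delta \leq 8\epsilon + 2\epsilon'' + \mathrm{h.o.}$ --- which actually yields slightly better constants than the paper's $16$ and $3$. Without this refinement (or some equivalent way of keeping the cross term as a geometric mean of two small quantities, as the paper's per-term inequality does), your argument does not prove the stated theorem.
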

If the device is close to perfect (which is the scenario we consider here), then $\epsilon,\epsilon'' \ll 1$ and $|\epsilon'|\ll 1$. Hence, the higher-order terms can be safely dropped, as is shown in detail in \cref{supp:relate_yikai}, and the above bound can be written as
\begin{equation}
    F_\mathrm{out}(\rho) \geq 16 |{\Tr[\rho O_{10}]}|^2 + 3F_\mathrm{in}(\rho) - 6.
\end{equation}

Using \cref{thm:lowerbounddistance} with threshold fidelity $0.915$, we conclude that the measurement outcomes sample from a classically intractable distribution provided $4|{\Tr[\rho O_{10}]}|^2 \geq 0.988$ and $F_\mathrm{in}(\rho) \geq 0.988$.

Observe that the final lower bound does not contain first-order terms in $\epsilon' = 1/2 - p_\mathrm{samp}$. 
However, we still need to estimate $p_\mathrm{samp}$ to ensure that its value is sufficiently close to $1/2$ that our first-order approximation still holds. Hence, we also require $|1/2-p_\mathrm{samp}|\leq 0.012$.

It is clear from the above theorem that our protocol can also tolerate a small amount of noise in the measurements of the quantum state. To simplify the analysis, in the rest of this section, we make the perfect-measurement assumption: all measurements, whether performed by the prover in the trusted-measurement scheme or by the verifier in the state-transmission scheme, are noiseless. We postpone the discussion of noisy measurements to \cref{supp:noisy}.

We claim that the number of copies of the history state needed to verify quantumness (i.e., the sample complexity) depends only on the precision and is not related to the system size $n$. As a result, the prover only needs to perform $O(n)$ trusted single-qubit measurements. These properties are formalized and proven in \cref{lemma:mmtsufficient_fin,lemma:mmtsufficient}.

Since the TVD between ideal and real output distributions is lower bounded by estimating $F_\mathrm{in}$ and $\Tr[\rho O_{10}]$, the sample complexity of the protocol is determined by how many copies of the state are required to estimate both quantities to a specific precision.

\begin{lemma}[Sufficiency of single-qubit measurements for $F_\mathrm{in}$ and $p_\mathrm{samp}$]
\label{lemma:mmtsufficient_fin}
    A verifier capable of single-qubit measurements and polynomial-time classical computation can estimate $F_\mathrm{in}$ and $p_\mathrm{samp}$ in a mixed state $\rho$ with error at most $\delta_o$ using $O(1/\delta_o^2)$ samples of $\rho$.
\end{lemma}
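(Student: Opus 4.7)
The plan is to express both $F_\mathrm{in}$ and $p_\mathrm{samp}$ as expectation values of observables that factorize into products of single-qubit projectors, and then to invoke Hoeffding's inequality for the sample-complexity bound. For the easier quantity, $p_\mathrm{samp}=\Tr[\rho(\ket{1}\bra{1}\otimes I)]$ is simply the probability of obtaining the $\ket{1}$ outcome when measuring the clock qubit in the $Z$ basis, so each copy of $\rho$ contributes an independent Bernoulli trial and $O(1/\delta_o^2)$ samples suffice to estimate it to additive error $\delta_o$ with constant failure probability.

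For $F_\mathrm{in}$, I would write it as a ratio $N/D$ with
\begin{equation}
    D = \sum_i p_i |\alpha_i|^2 = \Tr[\rho(\ket{0}\bra{0}\otimes I)] = 1 - p_\mathrm{samp},
\end{equation}
\begin{equation}
    N = \sum_i p_i |\alpha_i|^2 |\langle \phi_i|\phi_\mathrm{in}\rangle|^2 = \Tr[\rho(\ket{0}\bra{0}\otimes \ket{\phi_\mathrm{in}}\bra{\phi_\mathrm{in}})].
\end{equation}
The crucial observation is that $\ket{\phi_\mathrm{in}}=\bigotimes_i \ket{\phi_{\mathrm{in},i}}$ is a product state (as defined in the Bermejo-Vega scheme), so the projector factorizes as $\ket{\phi_\mathrm{in}}\bra{\phi_\mathrm{in}}=\bigotimes_i \ket{\phi_{\mathrm{in},i}}\bra{\phi_{\mathrm{in},i}}$. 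The verifier's single-qubit protocol is therefore: measure the clock qubit in the $Z$ basis, and each remaining qubit $i$ in the basis $\{\ket{\phi_{\mathrm{in},i}},\ket{\phi_{\mathrm{in},i}^\perp}\}$. The probability of the joint outcome ``clock is $\ket{0}$ and every other qubit is $\ket{\phi_{\mathrm{in},i}}$'' is exactly $N$, while the probability of $\ket{0}$ on the clock alone is $D$---both quantities being extractable from the same single-shot measurement record. Each is a mean of an indicator random variable, so Hoeffding's inequality again yields empirical estimates $\hat N,\hat D$ within additive error $O(\delta_o)$ using $O(1/\delta_o^2)$ copies.

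The main obstacle is propagating these additive errors through the ratio: the triangle-inequality bound
\begin{equation}
    \left|\frac{\hat N}{\hat D} - \frac{N}{D}\right| \leq \frac{|\hat N - N|}{\hat D} + \frac{N}{\hat D\,D}|\hat D - D|
\end{equation}
only gives accuracy $\delta_o$ on $F_\mathrm{in}$ when $D$ is bounded below by a constant. Fortunately the surrounding protocol already requires $p_\mathrm{samp}\approx 1/2$ (hence $D\approx 1/2$), and this is itself checked using the clock-qubit outcomes of the same measurement round; if the empirical $\hat p_\mathrm{samp}$ deviates substantially from $1/2$, the verifier simply rejects. Under the surviving hypothesis $D\geq 1/4$, a constant-factor tightening of the per-estimator tolerances still gives $O(1/\delta_o^2)$ total samples, completing the argument. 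In particular, a single product-basis readout per copy of $\rho$ simultaneously drives the estimators for both $F_\mathrm{in}$ and $p_\mathrm{samp}$, so the two quantities share the same sample budget.
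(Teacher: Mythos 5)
Your proposal is correct and follows essentially the same route as the paper: the measurement scheme (clock qubit in the $Z$ basis, each system qubit in its rotated product basis $\{\ket{x},\ket{x^\perp}\}$ or $\{\ket{y},\ket{y^\perp}\}$) is identical, and your estimator $\hat N/\hat D$ is numerically the same as the paper's ratio of counts $N_\mathrm{in+0}/N_\mathrm{in+}$. The only cosmetic difference is the concentration analysis---the paper applies Hoeffding directly to the conditional Bernoulli trials given the $N_\mathrm{in+}\approx N_\mathrm{M}/2$ clock-$\ket{0}$ outcomes, whereas you bound $\hat N$ and $\hat D$ separately and propagate through the quotient using the same fact that $D$ is bounded away from zero; both yield $O(1/\delta_o^2)$.
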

\begin{proof}
First recall that the ideal input state $\ket{\phi_\mathrm{in}}$ is a product state of either $\ket{x}:=\frac{1}{2}\left[ (1+\imag)\ket{0} + (1-\imag)\ket{1} \right]$ or $\ket{y}:=\frac{1}{2}\left[ (1+\imag)\ket{0} + e^{-\imag \pi/4} (1-\imag)\ket{1} \right]$. Their corresponding orthogonal states are $\ket{x^\perp}:=\frac{1}{2}[(1-\imag)\ket{0}-(1+\imag)\ket{1}]$ and $\ket{y^\perp}:=\frac{1}{2}[(1-\imag)\ket{0}-e^{-\imag \pi/4}(1+\imag)\ket{1}]$, respectively.

If a pure state $\ket{\psi_i}=\alpha_i \ket{0} \ket{\phi_i} + \beta_i \ket{1} \ket{\phi'_i}$ is given, the fidelity of the input state, $|\langle \phi_i|\phi_\mathrm{in}\rangle|^2$, can be estimated as follows. We first measure the clock qubit in the $Z$ basis, and if the outcome is $+1$ (so the state collapses to $\ket{0}\ket{\phi_i}$), we measure every other qubit in its corresponding rotated basis, which is either $\{\ket{x},\ket{x^\perp}\}$ or $\{\ket{y},\ket{y^\perp}\}$. If all measurement outcomes are $+1$, then $\ket{\phi'_i}$ collapses to $\ket{\phi_\mathrm{in}}$. Therefore, if the number of copies for which the clock qubit measurement gives $+1$ is $N_\mathrm{in+}$, and among them the number of copies where all other measurements give $+1$ is $N_\mathrm{in+0}$, then $\frac{N_\mathrm{in+0}}{N_\mathrm{in+}}$ is an unbiased estimator of $|\langle \phi_i | \phi_\mathrm{in} \rangle|^2$. Furthermore, for a mixed state $\rho$, the same strategy gives an estimate of $F_\mathrm{in}(\rho)$:
\begin{equation}
     F_\mathrm{in}(\rho) = \lim _{N_{\mathrm{in+}} \rightarrow \infty} \frac{N_\mathrm{in+0}}{N_\mathrm{in+}}.
\end{equation}
The precision of estimating $F_\mathrm{in}$ increases with $N_\mathrm{in+}$.
More precisely, we can use Hoeffding's inequality to quantify their relationship:
\begin{equation}
    \Pr[|F_\mathrm{in,M}-F_\mathrm{in}|\geq \delta_o] \leq 2\exp(-2\delta_o^2 N_\mathrm{in+}),
\end{equation}
where $F_\mathrm{in,M}$ represents the estimate from measurements. For the estimate of $F_\mathrm{in}$ to have error at most $\delta_o$ with probability at least $1-p_e$, it suffices to use $N_\mathrm{in+} = O(|{\ln p_e}|/\delta_o^2)$ valid measurements, independent of the system size. Moreover, since the single-step history state has equal weight between the $\ket{0}$ and $\ket{1}$ states of the clock qubit, $N_\mathrm{in+}$ should be close to $N_\mathrm{M}/2$, where $N_\mathrm{M}$ is the total number of states measured.

We also describe how to estimate $p_\mathrm{samp}$. Fortunately, this can already be obtained from $N_\mathrm{in+}$. Since $p_\mathrm{samp}$ is just the probability of a $Z$-basis measurement of the first qubit returning $-1$, 
$\frac{N_\mathrm{in+}}{N_M}$ is an unbiased estimator of $p_\mathrm{samp}$. Similarly, the probability for the estimate of $p_\mathrm{samp}$ to have error more than $\delta_o$ is upper bounded as
\begin{equation}
    \Pr[|p_\mathrm{samp,M}-p_\mathrm{samp}|\geq \delta_o] \leq 2\exp(-2\delta_o^2 N_\mathrm{M}),
\end{equation}
where $p_\mathrm{samp,M}$ denotes the estimated value of $p_\mathrm{samp}$.
Since $N_\mathrm{M}>N_\mathrm{in+}$, we can always estimate $p_\mathrm{samp}$ to a higher precision than $F_\mathrm{in}$ when they are estimated together.

\end{proof}

\begin{lemma}[Sufficiency of single-qubit Pauli measurements for $|\langle O_{10}\rangle|^2$]
\label{lemma:mmtsufficient}
A verifier capable of single-qubit measurements and polynomial-time classical computation can estimate $|\langle O_{10}\rangle|^2$ in a mixed state $\rho$ with error at most $\delta_o$ using $O(1/\delta_o^2)$ samples of $\rho$.
\end{lemma}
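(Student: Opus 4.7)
The plan is to build an unbiased complex-valued estimator for $\Tr[\rho\, O_{10}]$ from a product single-qubit Pauli measurement and then apply Hoeffding's inequality. The key observation is that the propagator $U = \exp\!\big(-\imag\sum_{\{i,j\}\in \mathrm{NN}}\tfrac{\pi}{4}Z_iZ_j\big)$ consists of mutually commuting $Z$-type terms, so $U$ is diagonal in the computational basis, with entries $\langle z|U|z\rangle = e^{-\imag\theta(z)}$ where $\theta(z)=\tfrac{\pi}{4}\sum_{\{i,j\}\in \mathrm{NN}}z_iz_j$ and $z_i\in\{\pm 1\}$. The verifier can evaluate $\theta(z)$ in $O(n)$ classical arithmetic steps.

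Next I would decompose $\ket{1}\bra{0}=\tfrac12(X-\imag Y)$, so that
\begin{equation*}
\Tr[\rho\, O_{10}] \;=\; \tfrac12\Tr[\rho\,(X_c\otimes U)] \;-\; \tfrac{\imag}{2}\Tr[\rho\,(Y_c\otimes U)],
\end{equation*}
where $X_c,Y_c$ act on the clock qubit. Each operator $P_c\otimes U$ with $P\in\{X,Y\}$ is normal and diagonal in the product basis $\{\ket{p}_P\otimes\ket{z}_Z\}$, with eigenvalue $p\,e^{-\imag\theta(z)}$. The verifier's strategy is therefore: for each copy of $\rho$, flip a fair coin to pick one of two settings---(i) measure the clock qubit in $X$, all others in $Z$, or (ii) measure the clock qubit in $Y$, all others in $Z$---and record the outcomes $(p,z)\in\{\pm 1\}\times\{\pm 1\}^n$. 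The complex random variable $p\,e^{-\imag\theta(z)}$ then has expectation $\Tr[\rho\,(P_c\otimes U)]$ and unit modulus.

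Because the real and imaginary parts of each estimator lie in $[-1,1]$, Hoeffding's inequality gives that $O(1/\delta^2)$ samples in each setting suffice to estimate $\Tr[\rho\,(X_c\otimes U)]$ and $\Tr[\rho\,(Y_c\otimes U)]$ to accuracy $\delta$ with high probability, and hence produce an estimate $\hat T$ of $\Tr[\rho\,O_{10}]$ with $|\hat T-\Tr[\rho\,O_{10}]|\le \delta$. Using $\|O_{10}\|\le 1$ and the elementary bound $\big||\hat T|^2-|\Tr[\rho\,O_{10}]|^2\big|\le (|\hat T|+|\Tr[\rho\,O_{10}]|)|\hat T-\Tr[\rho\,O_{10}]|=O(\delta)$, squaring yields $|\langle O_{10}\rangle|^2$ to additive precision $O(\delta)$; choosing $\delta=\Theta(\delta_o)$ gives the claimed $O(1/\delta_o^2)$ sample complexity and shows independence of $n$.

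The step that carries the conceptual weight is recognizing that the non-Hermitian operator $O_{10}$ can be measured without resorting to a Hadamard test or controlled-$U$ circuit, by exploiting the fact that $U$ is simultaneously diagonal with $Z$-basis projectors even though its eigenvalues are complex phases: this collapses $X_c\otimes U$ into a normal operator whose eigenbasis is a single tensor product of Pauli eigenstates. Everything else---the efficient computability of $\theta(z)$, the Hoeffding concentration on the real and imaginary parts separately, and the propagation of error through the squaring step---is routine bookkeeping.
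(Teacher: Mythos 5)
Your proposal is correct and follows essentially the same route as the paper: decompose $\ket{1}\bra{0}=\tfrac12(X-\imag Y)$, exploit that $U$ is diagonal in the computational basis so its \emph{de facto} complex value can be computed classically from single-qubit $Z$ outcomes (your $e^{-\imag\theta(z)}$ is exactly the paper's product $\prod_k(\cos(\pi/4)-\imag\sin(\pi/4)h_k)$), and apply Hoeffding's inequality separately to the real and imaginary parts. Your explicit propagation of the error through the squaring step, via $\bigl||\hat T|^2-|T|^2\bigr|\le(|\hat T|+|T|)\,|\hat T-T|$, is if anything slightly more careful than the paper's corresponding bound.
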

\begin{proof}
We can write
\begin{equation}
    O_{10}=\ket{1}\bra{0}\otimes U = \frac{1}{2}(X-\imag Y)\otimes U.
\end{equation}
It can be difficult to measure $O_{10}$ in general, because $U$ typically decomposes into exponentially many Pauli terms. Fortunately, in our protocol, we have $U=\exp(-\imag H T)$ for the $ZZ$-type Hamiltonian
\begin{equation}
    H=\frac{\pi}{4} \sum_{k=1}^m  H_k = \frac{\pi}{4} \sum_{\{i,j\} \in \mathrm{NN} }  Z_i Z_j ,
\end{equation}
where each $H_k$ is one of the $Z_i Z_j$s. As all $H_k$ terms commute, we can decompose $U$ into a product of evolutions for each term, and further express these evolutions in terms of trigonometric functions as every $H_k$ is a Pauli string:
\begin{equation}
\label{eq:Us_decomposed}
\begin{aligned}
    U & = \exp\left(-\imag\frac{\pi}{4} \sum_{k=1}^m  H_k \right) \\
    & = \prod_{k=1}^m \exp\left(-\imag \frac{\pi}{4} H_k   \right)\\
    & = \prod_{k=1}^m \left( \cos(\frac{\pi}{4})I -\imag \sin(\frac{\pi}{4}) H_k \right).
\end{aligned}
\end{equation}
$U$ is not a well-defined quantum observable since it is not Hermitian, but we can still define its \emph{de facto} single-measurement outcome as a complex number $u$. Since all $H_k$s can be simultaneously measured, $u$ can be inferred by evaluating the right-hand side of \cref{eq:Us_decomposed}. More specifically, letting $h_k$ denote the outcome of a single measurement of $H_k$, we have
\begin{equation}
\label{eq:evaluation_equation}
    u = \prod_{k=1}^m \left( \cos(\frac{\pi}{4}) -\imag \sin(\frac{\pi}{4}) h_k \right).
\end{equation}
Since each $H_k$ is $Z_i Z_j$, the verifier need only perform single-qubit $Z$ measurements to obtain the $h_k$s.

In summary, to estimate the expected value of $O_{10}$, it suffices to measure the clock qubit in either the $X$ or the $Y$ basis, measure all other qubits in the $Z$ basis to get the values of $u$, and repeat this process enough times to obtain the mean values of $X\otimes U$ and $Y\otimes U$ with sufficiently high precision.

To determine the number of samples required, we evaluate the probability that the measured value deviates from the expected value using concentration bounds. Note that $O_{10}$ is not Hermitian, so its \emph{de facto} measurement outcomes are complex numbers. Recall that $O_{10} = \frac{1}{2} \left( X\otimes U-\imag Y\otimes U\right)$, so one sample of the value of $O_{10}$ can be obtained by measuring two copies of the state of interest, and both the real and imaginary parts of the measurement outcome of $O_{10}$ are at most $1/2$. Therefore, for any $0<\delta_o <1/2$, letting $\langle \cdot \rangle_\mathrm{M}$ be the average of the measurement outcomes after running the experiment $N_\mathrm{M}$ times, and using Hoeffding's inequality,
\begin{equation}
\label{eq:hoeff_main}
\begin{aligned}
    &\Pr\left[ \left| |\langle O_{10}\rangle_\mathrm{M}|^2 - |\langle O_{10} \rangle|^2 \right| \geq \delta_o^2 \right]\\
    &\quad \leq \Pr[ |\Re[\langle O_{10}\rangle_\mathrm{M}] - \Re[\langle O_{10} \rangle]| \geq \delta_o ] \\
    &\qquad + \Pr[ |\Im[\langle O_{10}\rangle_\mathrm{M}] - \Im[\langle O_{10} \rangle]| \geq \delta_o ]\\
    &\quad \leq 4\exp(- 2 \delta_o^2 N_\mathrm{M}).
\end{aligned}
\end{equation}

In conclusion, to ensure that the error in the estimation of $|\langle O_{10} \rangle|^2$ is less than $\delta_o$ with probability at least $1-p_e$,
it suffices to measure $O_{10}$ on $2N_\mathrm{M} = O\left(|\ln p_e| / \delta_o^2 \right)$ copies of the state, irrespective of the size of the system. Moreover, if $p_e$ is a negligible function of
the security parameter $\lambda$, then $N_\mathrm{M}$ only needs to scale linearly with $\lambda$. In other words, the probability of obtaining a wrong estimate of $|\langle \Tr[\rho O_{10}] \rangle|^2$ converges to 0 exponentially with respect to the number of copies, $N_\mathrm{M}$.
\end{proof}

\subsection{Our Protocol}
In this subsection, we outline the behavior of the verifier and the prover in our protocol, and present the soundness and completeness conditions.

The verifier first provides the prover with descriptions of $H$ and $\ket{\phi_\mathrm{in}}$, and the desired number of copies of the history state $N_\mathrm{M}$ (whose value is determined in \cref{thm:complete,thm:soundness}).

The verifier asks the prover to perform measurements to estimate (or measures by herself if state transmission is allowed) $\left|\langle O_{10}\rangle \right|^2$, $N_\mathrm{in+0}$, and $N_\mathrm{in+}$ from the $N_\mathrm{M}$ samples to verify the correctness of the output state. She also asks the prover to generate samples by measuring the $\ket{\phi'}$ state conditioned on obtaining $-1$ from measuring the clock state.
Therefore, the verifier should generate two random bits for every state before measuring it.

The first bit, $b_\mathrm{sampling}$, determines whether the verifier should ask the prover to generate samples ($b_\mathrm{sampling}=1$) or verify the output state ($b_\mathrm{sampling}=0$). If $b_\mathrm{sampling}=1$, the prover should measure the clock qubit in the standard basis and all system qubits in the Hadamard basis. If the clock is measured to be $-1$, and if the prover passes the verification protocol, then the outcomes of Hadamard measurements on system qubits are samples from the desired distribution.

When $b_\mathrm{sampling}=0$, the verifier must decide whether to use this copy to estimate $|\langle O_{10} \rangle|^2$ or $F_\mathrm{in}(\rho)$ and $p_\mathrm{samp}$ by generating the other random bit $b_\mathrm{testtype}$. If the second random bit, $b_\mathrm{testtype}$, is $0$, then she estimates $F_\mathrm{in}(\rho)$ and $p_\mathrm{samp}$ by asking the prover to measure the clock qubit in the computational basis and all system qubits in their corresponding basis, updating the values of $N_\mathrm{in+0}$ and $N_\mathrm{in+}$, as in the proof of \cref{lemma:mmtsufficient_fin}. For $b_\mathrm{testtype}=1$, she estimates $|\langle O_{10} \rangle|^2$, so the prover should use the same strategy as in the proof of \cref{lemma:mmtsufficient} to measure the value of $U$ and, subsequently, the values of $X\otimes U$ or $Y\otimes U$.

In the end, the verifier estimates the parameters of interest. As in the proofs of \cref{lemma:mmtsufficient_fin,lemma:mmtsufficient}, we denote the estimated values of $|\langle O_{10}\rangle|^2$, $p_\mathrm{samp}$, and $F_\mathrm{in}$ by $|\langle O_{10}\rangle_\mathrm{M}|^2$, $p_\mathrm{samp,M}$, and $F_\mathrm{in,M}$, respectively.
The verifier then decides to accept or not by checking whether the estimated values are within the acceptance ranges, which are $0.994\leq 4|\langle O_{10} \rangle_\mathrm{M}|^2 \leq 1$, $0.994\leq F_\mathrm{in,M}(\rho)\leq 1$, and $0.494\leq p_\mathrm{samp,M} \leq 0.506$.
Note that here we choose more stringent values than the quantum advantage criterion in \cref{thm:lowerbounddistance} such that if the fidelity of the output state is slightly below the quantum advantage criterion, the verifier will reject with high probability. This is related to the \emph{soundness} of the protocol, which is discussed in detail in Theorem~\ref{thm:soundness}.

\protocol{Protocol to demonstrate quantum advantage by analog quantum simulation}{Our protocol for demonstrating quantum advantage.}{prot:protocol_draft_1}{
Let $H$ be a Hamiltonian to be simulated,
and let $\ket{\phi_\mathrm{in}}$ be the initial state of Hamiltonian evolution. 

\begin{enumerate}
\item The verifier initializes counters $s_{\{X,U\}}, s_{\{Y,U\}}, N_X, N_Y, N_\mathrm{in+}, N_\mathrm{in+0}$ to 0. She sends $N_M$ and classical descriptions of $H$ and $\ket{\phi_\mathrm{in}}$ to the prover.
\item The prover creates $N_\mathrm{M}$ copies of the correct history state $\ket{\psi} = \frac{1}{\sqrt{2}}(\ket{0}\ket{\phi_\mathrm{in}} + e^{i\gamma}\ket{1}U\ket{\phi_\mathrm{in}})$, where $\gamma$ is a fixed arbitrary phase, and (only in the state-transmission scenario) sends them to the verifier.
\item For each state (in the trusted-measurement scenario) to be measured by the prover or (in the state-transmission scenario) to be received by the verifier:
\begin{enumerate}    
    \item The verifier generates 2 random bits $b_\mathrm{sampling}$ and $b_\mathrm{testtype}$. If $b_\mathrm{sampling}=1$, the verifier will obtain a sample from the distribution. If $b_\mathrm{sampling}=0$ and $b_\mathrm{testtype}=0$, the verifier will check if the input state is correct. If $b_\mathrm{sampling}=0$ and $b_\mathrm{testtype}=1$, the verifier will check if the Hamiltonian evolution is correct.
    
    \item If $b_\mathrm{sampling}=1$, the verifier measures (or asks the prover to measure) the first qubit in the standard basis. If the measurement outcome is $-1$, then she measures all other qubits in the Hadamard basis and stores the measured bit string.

    \item If $b_\mathrm{sampling}=0$ and $b_\mathrm{testtype}=0$, the verifier measures (or asks the prover to measure) the first qubit in the standard basis. If the outcome is $+1$:
    \begin{enumerate}
        \item The verifier updates the counter by $N_\mathrm{in+}\leftarrow N_{\mathrm{in+}}+1$.
        \item The verifier measures (or asks the prover to measure) every other qubit in the following basis: if its initial state is supposed to be $\ket{x}$, then measure it in the $\{\ket{x},\ket{x^\perp}\}$ basis; otherwise, measure it in the $\{\ket{y} , \ket{y^\perp}\}$ basis.
        \item If all outcomes are $+1$, she updates the counter as $N_{\mathrm{in+}0}\leftarrow N_{\mathrm{in+}0}+1$.
    \end{enumerate}

    \item If $b_\mathrm{sampling}=0$ and $b_\mathrm{testtype}=1$:
    \begin{enumerate}
        \item The verifier selects the basis from $\{X,Y\}$ randomly, measures (or asks the prover to measure) the clock qubit in the chosen basis, and stores the measurement outcome in $b$.
        \item The verifier measures (or asks the prover to measure) all system qubits in the $Z$ basis. Then, she calculates the values of $U$ according to the proof of \cref{lemma:mmtsufficient}, denoted by $u$.
        \item If the basis chosen is $X$, the verifier updates the counters as $N_X\leftarrow N_X+1$, $s_{\{X,U\}}\leftarrow s_{\{X,U\}} + b u$.
        \item If the basis chosen is $Y$, the verifier updates the counters as $N_Y\leftarrow N_Y+1$, $s_{\{Y,U\}}\leftarrow s_{\{Y,U\}} + b u$.
    \end{enumerate}
\end{enumerate}
\item 
\begin{enumerate}
    \item The verifier calculates $h_{X,U}=s_{\{X,U\}} / N_X$ and $h_{Y,U}=s_{\{Y,U\}} / N_Y$. She also calculates $\langle O_{10}\rangle_\mathrm{M} = h_{X,U}-\imag h_{Y,U}$ and $4|\langle O_{10} \rangle_\mathrm{M}|^2$.
    \item The verifier calculates $F_\mathrm{in,M} = \frac{N_\mathrm{in+0}}{N_\mathrm{in+}}$.
\end{enumerate}
\item If $4|\langle O_{10} \rangle_\mathrm{M}|^2 > 0.988$ and $F_\mathrm{in,M} > 0.988$, the verifier accepts the interaction and publishes the stored bit strings as the samples from the distribution. Otherwise, she rejects.
\end{enumerate}
}

We now present the completeness and soundness properties of the protocol. A proof of quantumness is called \emph{complete} if any honest prover with quantum computational ability (which in our case means being able to prepare the required history state $\ket{\psi_\mathrm{hist}}$ with tolerable error, as explained in more detail below) is accepted with probability at least $2/3$. It is called \emph{sound} if no prover with only classical polynomial-time computational ability can be accepted with probability higher than $1/3$.

Before showing the completeness theorem, we observe that any phase error in the clock qubit does not affect the correctness of sampling, which means that a family of history states can be and should be accepted. In fact, one can easily check that $F_\mathrm{in}(\ket{\psi_\mathrm{hist}(\theta)}\bra{\psi_\mathrm{hist}(\theta)})=1$ and $4|{\Tr[\ket{\psi_\mathrm{hist}(\theta)}\bra{\psi_\mathrm{hist}(\theta)}O_{10}]}|^2=1$ for all $\ket{\psi_\mathrm{hist}(\theta)}:=\frac{1}{\sqrt{2}}(\ket{0}\ket{\phi_\mathrm{in}}+e^{\imag \theta}\ket{1}U\ket{\phi_\mathrm{in}})$, where $\theta$ can be any real number. This immediately leads to the following completeness result.

\begin{theorem}[Completeness]
\label{thm:complete}
    If the prover constructs $N_M=3.5\times 10^6$ copies of $\ket{\psi_\mathrm{hist}(\theta)}$ with a fixed value of $\theta$, then the verifier will accept with probability at least $2/3$.
\end{theorem}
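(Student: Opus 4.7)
The plan is to apply Hoeffding's inequality to each of the verifier's two acceptance tests and union-bound the failure probabilities against $1/3$. When the prover supplies the ideal history states $\ket{\psi_\mathrm{hist}(\theta)}$, every parameter estimated by the verifier has a true value sitting at the top of its acceptance band, so rejection can only come from concentration failures.

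First I would compute the true parameter values for $\rho = \ket{\psi_\mathrm{hist}(\theta)}\bra{\psi_\mathrm{hist}(\theta)}$. Reading off the decomposition in \cref{eqn-psi-diag} with $\alpha = 1/\sqrt{2}$, $\beta = e^{\imag\theta}/\sqrt{2}$, $\ket{\phi} = \ket{\phi_\mathrm{in}}$, and $\ket{\phi'} = U\ket{\phi_\mathrm{in}}$, one obtains $F_\mathrm{in}(\rho)=1$, $p_\mathrm{samp}=1/2$, and $\Tr[\rho O_{10}] = \tfrac{1}{2}e^{-\imag\theta}$, so $4|\langle O_{10}\rangle|^2 = 1$. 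These values are independent of $\theta$, which is what allows the whole family of states with arbitrary clock phase to pass the test, as already anticipated in the remark preceding the theorem.

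Next I would partition the $N_M$ states according to the random bits and the clock-qubit outcome. Each copy falls into the input-fidelity test with probability $1/4$ (with a further factor $1/2$ for clock outcome $+1$ under the ideal state) and into the $O_{10}$ test with probability $1/4$ (with a further factor $1/2$ for the $X$ vs.\ $Y$ basis choice). A standard Chernoff bound guarantees that at $N_M = 3.5\times 10^6$ each of the subsample counts $N_\mathrm{in+}$, $N_X$, $N_Y$ exceeds, say, $N_M/16 \approx 2.2\times 10^5$ except on an event of probability $\exp(-\Omega(N_M))$. Conditioned on these bin sizes, \cref{lemma:mmtsufficient_fin} bounds the failure probability of the test $F_\mathrm{in,M} > 0.988$ by $2\exp(-2\cdot 0.012^2 \cdot N_\mathrm{in+})$, and \cref{lemma:mmtsufficient} bounds the failure probability of the test $4|\langle O_{10}\rangle_M|^2 > 0.988$ (which requires $|\langle O_{10}\rangle_M|^2$ to be within $0.003$ of $1/4$) by an expression of the form $4\exp(-\Omega(N_X+N_Y))$ with tolerance of order $10^{-3}$. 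Union-bounding over the three rare events yields total rejection probability comfortably below $1/3$.

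The only real obstacle is bookkeeping: the two estimators share the same pool of $N_M$ copies through independent coin flips, so one must either argue that the sub-populations are genuinely independent (which they are, being determined by independent Bernoulli bits) or condition on the bin sizes before applying Hoeffding. Beyond that, the constant $3.5\times 10^6$ is very generous---orders of magnitude above what the exponential tails strictly require---so no fine-tuning is needed to close the argument.
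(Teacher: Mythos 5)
Your proposal is correct and takes essentially the same route as the paper's own proof: compute that the ideal state $\ket{\psi_\mathrm{hist}(\theta)}$ gives $F_\mathrm{in}=1$, $p_\mathrm{samp}=1/2$, and $4|{\Tr[\rho O_{10}]}|^2=1$ independent of $\theta$, then invoke the Hoeffding bounds of \cref{lemma:mmtsufficient_fin,lemma:mmtsufficient} on the sub-populations and union-bound the failure events. The only differences are cosmetic: you work with the $0.988$ threshold from the protocol box (tolerances $0.012$ and $0.003$) whereas the paper's proof uses the stricter $0.994$ band (tolerances $0.006$ and $0.0015$, under which the $O_{10}$ term alone already contributes $\approx 0.08$, so $N_M=3.5\times 10^6$ is tighter than your ``orders of magnitude'' remark suggests), and you add an explicit concentration argument for the bin sizes where the paper simply fixes them deterministically.
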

\begin{proof}
    We can calculate that $F_\mathrm{in}(\ket{\psi_\mathrm{hist}(\theta)}\bra{\psi_\mathrm{hist}(\theta)})=1$, $4|{\Tr[\ket{\psi_\mathrm{hist}(\theta)}\bra{\psi_\mathrm{hist}(\theta)}O_{10}]}|^2=1$, and $p_\mathrm{samp}=1/2$. Therefore, it suffices to ensure the probabilities that the measurement errors exceed $0.0015$ for $|\langle O_{10} \rangle|^2$, and $0.006$ for $F_\mathrm{in,M}$ and $p_\mathrm{samp}$, are all less than $1/3$.
    
    Suppose that of $N_\mathrm{M}$ available samples, $N_\mathrm{M}/2$ are used to generate samples, $N_\mathrm{M}/4$ are used to estimate $|\langle O_{10} \rangle|^2$, and $N_\mathrm{M}/4$ are used to estimate $F_\mathrm{in}$. According to \cref{lemma:mmtsufficient_fin,lemma:mmtsufficient}, and letting $N_\mathrm{in+} = N_\mathrm{M}/8$, the probability of rejection is at most $\max\{2\exp(- 0.006^2 N_\mathrm{M}/4) , 4\exp(- 0.0015^2 N_\mathrm{M}/2)\} = 0.08<1/3$.
\end{proof}

However, in a real experiment, it is unlikely for a device to only make one specific error---a phase error on the clock qubit---and to otherwise produce $\ket{\psi_\mathrm{hist}(\theta)}$ perfectly. Instead, every experimental platform might have its own pattern of noise with multiple types of errors. Our verification scheme also has some robustness against these more general errors. Here we characterize the robustness for the case where the device can prepare a noiseless initial state but its Hamiltonian evolution has some error.

\begin{theorem}[Completeness + Robustness]
    If the prover constructs $N_M=3.5\times 10^{6}$ copies of the noisy history state $\ket{\psi_\mathrm{noisy}}:=\frac{1}{\sqrt{2}}(\ket{0}\ket{\phi_\mathrm{in}}+e^{\imag \theta}\ket{1}\ket{\phi'})$ where $|\bra{\phi'}U\ket{\phi_\mathrm{in}}|^2=0.999$, then the verifier will accept the interaction with probability at least $2/3$.
\end{theorem}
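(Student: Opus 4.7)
The plan is to parallel the proof of Theorem~2 (completeness), adjusting only for the fact that imperfect propagation slightly lowers the ideal value of the propagation parameter. First I would evaluate the three tested quantities $F_\mathrm{in}$, $p_\mathrm{samp}$, and $4|\Tr[\rho O_{10}]|^2$ on the pure noisy history state. A direct application of the definitions in Eqs.~(15)--(18), using $\alpha = 1/\sqrt{2}$, $\beta = e^{\imag\theta}/\sqrt{2}$, $\ket{\phi_i}=\ket{\phi_\mathrm{in}}$, and $\ket{\phi'_i}=\ket{\phi'}$, yields $F_\mathrm{in} = |\langle \phi_\mathrm{in}|\phi_\mathrm{in}\rangle|^2 = 1$, $p_\mathrm{samp} = |\beta|^2 = 1/2$, and $\Tr[\rho O_{10}] = \tfrac{1}{2} e^{-\imag\theta}\langle\phi'|U|\phi_\mathrm{in}\rangle$, so $4|\Tr[\rho O_{10}]|^2 = |\langle\phi'|U|\phi_\mathrm{in}\rangle|^2 = 0.999$. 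All three true values fall strictly inside the acceptance region.

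Next I would quantify the slack with respect to the acceptance boundaries. Since $F_\mathrm{in}=1$ and $p_\mathrm{samp}=1/2$ still sit at their ideal values, their margins of $0.006$ are unchanged from Theorem~2. The only substantive difference is that $4|\langle O_{10}\rangle|^2 = 0.999$ now sits only $0.005$ above the lower acceptance threshold $0.994$, rather than $0.006$. I would then reuse the sample allocation from Theorem~2: half of the $N_M$ copies for sampling, a quarter for the input test, and a quarter for the $O_{10}$ test. Invoking \cref{lemma:mmtsufficient_fin}, the $F_\mathrm{in,M}$ and $p_\mathrm{samp,M}$ tests reject with probability bounded by $2\exp(-0.006^2 N_M/4) \sim e^{-30}$, which is negligible. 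Invoking \cref{lemma:mmtsufficient} with the tightened margin, the $O_{10}$ test rejects with probability at most $4\exp\!\bigl(-(0.005/4)^2 N_M/2\bigr) = 4 e^{-2.73} \approx 0.26$ for $N_M = 3.5\times 10^6$. A union bound over the three rejection events then yields total rejection probability below $1/3$, giving the claimed acceptance probability of at least $2/3$.

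The main obstacle is confirming that the shrunken margin in the $O_{10}$ test does not force an increase in $N_M$. Since the Hoeffding exponent scales as the square of the margin, the contribution from this test grows by a factor of roughly $(0.006/0.005)^2 \approx 1.44$ in the exponent, which inflates the rejection probability from $\approx 0.08$ in Theorem~2 to $\approx 0.26$ here. The value $N_M = 3.5\times 10^6$ is tight but still sufficient, and the margin analysis for $F_\mathrm{in}$ and $p_\mathrm{samp}$ carries over verbatim. Hence no further changes to the argument structure are required.
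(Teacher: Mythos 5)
Your proposal is correct and follows essentially the same route as the paper's proof: evaluate $F_\mathrm{in}=1$, $p_\mathrm{samp}=1/2$, and $4|\Tr[\rho O_{10}]|^2=0.999$ on the noisy state, observe that the only change from the completeness theorem is the reduced margin of $0.005$ for the propagation test, and rerun the Hoeffding bounds to get rejection probability $\approx 0.26 < 1/3$ (the paper states acceptance probability $0.73$). Your version simply makes explicit the numerical bookkeeping that the paper compresses into one sentence.
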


\begin{proof}
We can check that $F_\mathrm{in}(\ket{\psi_\mathrm{noisy}}\bra{\psi_\mathrm{noisy}})=1$, $p_\mathrm{samp}=1/2$, and
$4|\langle O_{10} \rangle|^2 = |\langle\phi'|U|\phi_\mathrm{in}\rangle|^2 = 0.999$. Therefore, it suffices to estimate $4|\langle O_{10} \rangle|^2$ within precision $0.005$ and $F_\mathrm{in}$ and $p_\mathrm{samp}$ within precision $0.006$. This precision can be achieved using $N_M$ copies of the prepared state, which gives success probability $0.73>2/3$.
\end{proof}

Next, we establish the soundness condition. Recall that, informally, a quantum advantage protocol is called \emph{sound} if all provers without quantum computational capability are rejected by the verifier with high probability.

\begin{theorem}[Soundness]
\label{thm:soundness}
    If the verifier accepts with probability at least $2/3$ with $N_\mathrm{M}=3.5\times 10^{6}$ copies of the state provided by the prover, then measurements of the state generate samples from a classically intractable distribution.
\end{theorem}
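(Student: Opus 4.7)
The plan is to establish soundness by contrapositive: I will argue that if the true parameters associated to the prover's state are not good enough to certify $F_\mathrm{output} \geq 0.915$ via \cref{thm:lowerbounddistance}, then with $N_\mathrm{M} = 3.5 \times 10^6$ samples the acceptance probability must drop below $2/3$. Equivalently, conditioned on acceptance with probability at least $2/3$, the true values of $4|\Tr[\rho O_{10}]|^2$, $F_\mathrm{in}(\rho)$, and $1/2 - p_\mathrm{samp}$ must lie in the regime where \cref{thm:lowerbounddistance} certifies $F_\mathrm{output} \geq 0.915$, which is exactly the threshold needed to invoke the classical hardness results of \cite{bermejo2018architectures,ringbauer2023verifiable} via the $F_\mathrm{output}$-to-$\TVD(P_\mathrm{exp},P_\mathrm{ideal})$ relation derived in \cref{supp:relate_yikai}.

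First I would fix small Hoeffding margins $\delta_O, \delta_F, \delta_p$ (on the order of a few times $10^{-3}$) and apply the concentration bounds from the proofs of \cref{lemma:mmtsufficient_fin,lemma:mmtsufficient}. With $N_\mathrm{M}/4$ copies used for each of $|\langle O_{10}\rangle|^2$ and $F_\mathrm{in}$, and $N_\mathrm{in+} \approx N_\mathrm{M}/8$ valid samples contributing to $F_\mathrm{in,M}$, the bounds
\begin{equation}
    \Pr\bigl[|4|\langle O_{10}\rangle_\mathrm{M}|^2 - 4|\langle O_{10}\rangle|^2| \geq \delta_O\bigr] \leq 4\exp(-\tfrac{1}{2}\delta_O^2 N_\mathrm{M}),
\end{equation}
\begin{equation}
    \Pr\bigl[|F_\mathrm{in,M} - F_\mathrm{in}| \geq \delta_F\bigr] \leq 2\exp(-\tfrac{1}{4}\delta_F^2 N_\mathrm{M}),
\end{equation}
\begin{equation}
    \Pr\bigl[|p_\mathrm{samp,M} - p_\mathrm{samp}| \geq \delta_p\bigr] \leq 2\exp(-2\delta_p^2 N_\mathrm{M}),
\end{equation}
each become vanishingly small at $N_\mathrm{M} = 3.5\times 10^6$ for the chosen margins; in particular each failure probability can be made smaller than, say, $0.1$.

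Next I would show that if any one of the true values fell outside its \cref{thm:lowerbounddistance}-safe window by more than its Hoeffding margin (e.g.\ $4|\langle O_{10}\rangle|^2 < 0.988 - \delta_O$, or $F_\mathrm{in} < 0.988 - \delta_F$, or $|1/2 - p_\mathrm{samp}| > 0.012 + \delta_p$), then the probability that the corresponding estimator lands in the acceptance window is exponentially small in $N_\mathrm{M}$, forcing the overall acceptance probability strictly below $2/3$. By the chosen $N_\mathrm{M}$ this contradiction shows that, conditioned on acceptance probability $\geq 2/3$, the true parameters satisfy $4|\langle O_{10}\rangle|^2 \geq 0.988$, $F_\mathrm{in} \geq 0.988$, and $|1/2-p_\mathrm{samp}| \leq 0.012$. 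Plugging these into \cref{thm:lowerbounddistance} gives $F_\mathrm{output} \geq 16\cdot(0.988/4) + 3\cdot 0.988 - 6 = 0.916 \geq 0.915$, while the $p_\mathrm{samp}$ bound keeps the dropped higher-order terms negligible. The conversion $F_\mathrm{output} \geq 0.915 \Rightarrow \TVD(P_\mathrm{exp},P_\mathrm{ideal}) < 0.292$ from \cref{supp:relate_yikai} then invokes the hardness results of \cite{bermejo2018architectures,ringbauer2023verifiable} to conclude that $P_\mathrm{exp}$ is classically intractable to sample.

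The main obstacle is the constant bookkeeping: the gap between the protocol's acceptance thresholds ($0.988$) and the quantum-advantage criteria of \cref{thm:lowerbounddistance} has to absorb three independent Hoeffding deviations simultaneously, and $N_\mathrm{M} = 3.5\times 10^6$ is sized precisely so that after a union bound the total failure probability on the bad side is below $1/3$. A secondary subtlety is that soundness truly requires enforcing $|1/2 - p_\mathrm{samp,M}| \leq 0.012$ (even though \cref{thm:lowerbounddistance} has no first-order $\epsilon'$ term), because without that check an adversarial prover could push $p_\mathrm{samp}$ far from $1/2$ and blow up the higher-order corrections in \cref{thm:lowerbounddistance}, thereby voiding the $F_\mathrm{output} \geq 0.915$ guarantee while still passing the other two tests.
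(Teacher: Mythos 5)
Your proposal is correct and follows essentially the same route as the paper: Hoeffding concentration from \cref{lemma:mmtsufficient_fin,lemma:mmtsufficient} shows the estimators track the true parameters to within $0.006$, so acceptance with probability at least $2/3$ forces $4|\langle O_{10}\rangle|^2 \geq 0.988$, $F_\mathrm{in}\geq 0.988$, and $|1/2-p_\mathrm{samp}|\leq 0.012$, whence \cref{thm:lowerbounddistance} gives $F_\mathrm{output}\geq 0.915$ and the TVD bound of \cref{supp:relate_yikai} invokes the hardness results. The only bookkeeping caveat is that the slack must live in the acceptance thresholds themselves (the paper sets them at $0.994$, i.e.\ $0.006$ above the quantum-advantage criteria of $0.988$), since a contrapositive argument against a $0.988$ acceptance cutoff would only certify true values of $0.988-\delta$, just short of what \cref{thm:lowerbounddistance} requires.
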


\begin{proof}
    This theorem has almost been proven in \cref{thm:lowerbounddistance}, in which $F_\mathrm{output} \geq 0.915$ is guaranteed if $F_\mathrm{in}\geq 0.988$, $|p_\mathrm{samp}-1/2|\leq 0.012$, and $4|\langle O_{10} \rangle|^2 \geq 0.988$. Also, according to the proof of \cref{thm:complete}, with $N_\mathrm{M}$ samples, the error in the estimation of all parameters is lower than $0.006$ with probability at least $2/3$.
    
    Therefore, if the verifier accepts with probability at least $2/3$, which means that $F_\mathrm{in,M}\geq 0.994$, $|p_\mathrm{samp}-1/2|\leq 0.006$, and $4|\langle O_{10} \rangle_\mathrm{M}|^2 \geq 0.994$ with probability at least $2/3$, then it is immediately clear that $F_\mathrm{in}\geq 0.988$, $|p_\mathrm{samp}-1/2|\leq 0.012$, and $4|\langle O_{10} \rangle|^2 \geq 0.988$, which implies that $F_\mathrm{output}\geq 0.915$.
\end{proof}

A detailed description of the protocol can be found in \cref{prot:protocol_draft_1}.

One hidden assumption in this section is that all copies of the history state provided by the prover are independent of each other. However, if the prover is an adversarial challenger, he can provide correlated states. In \cref{supp:martingale}, we outline how martingale inequalities can be used to show that our protocol is sound even if the states measured are correlated across multiple trials.

The analysis in this section assumes noiseless measurements, which are impractical in real devices. We discuss the protocol's tolerance of noisy measurements in \cref{supp:noisy}.

\section{The Honest-Prover Strategy}
\label{sec:honest}
\subsection{History State Preparation}
\label{sec:echo_trick}

\begin{figure}
    \centering
    \begin{tikzpicture}
    
    \filldraw (0,0) circle (3pt);
    \filldraw (0.5,0.5) circle (3pt);
    \filldraw (0.5,-0.5) circle (3pt);
    \filldraw (-0.5,0.5) circle (3pt);
    \filldraw (-0.5,-0.5) circle (3pt);
    \filldraw (1,1) circle (3pt);
    \filldraw (-1,1) circle (3pt);
    \filldraw (-1,-1) circle (3pt);
    \filldraw (1,-1) circle (3pt);
    \filldraw (0,-1) circle (3pt);
    \filldraw (0,1) circle (3pt);
    \filldraw (1,0) circle (3pt);
    \filldraw (-1,0) circle (3pt);
    
    \draw (0.5,0) circle (3pt);
    \draw (-0.5,0) circle (3pt);
    \draw (0,0.5) circle (3pt);
    \draw (0,-0.5) circle (3pt);
    \draw (1,0.5) circle (3pt);
    \draw (-1,-0.5) circle (3pt);
    \draw (0.5,1) circle (3pt);
    \draw (0.5,-1) circle (3pt);
    \draw (-0.5,1) circle (3pt);
    \draw (-0.5,-1) circle (3pt);
    \draw (-1,0.5) circle (3pt);
    \draw (1,-0.5) circle (3pt);
    \end{tikzpicture}
    \caption{The square lattice can be divided into two parts such that every $ZZ$ operator acts on qubits from both parts.}
    \label{fig:square}
\end{figure}
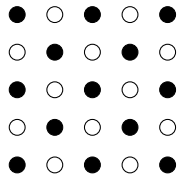

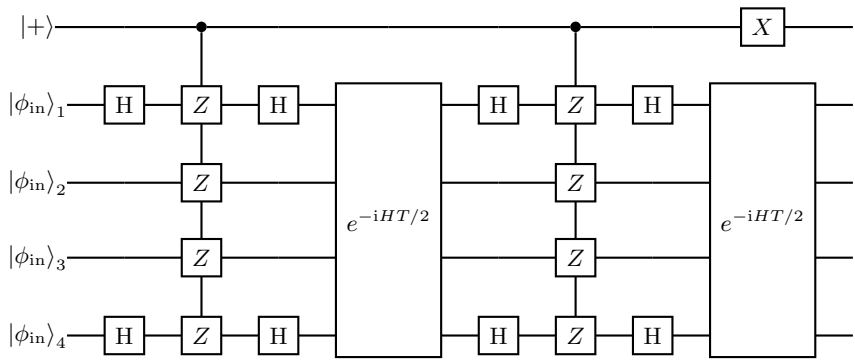
\begin{figure*}[t]
\centering
\begin{quantikz}
&\ket{+} &\qw &\ctrl{4}&\qw &\qw &\qw&\ctrl{4}&\qw &\gate{X} &\qw \\
&\ket{\phi_\mathrm{in}}_1 &\gate{\mathrm{H}}&\gate{Z}&\gate{\mathrm{H}}  &\gate[wires=4]{e^{-\imag H T/2}} &\gate{\mathrm{H}} &\gate{Z}&\gate{\mathrm{H}}  &\gate[wires=4]{e^{-\imag H T/2}} &\qw \\
&\ket{\phi_\mathrm{in}}_2 &\qw &\gate{Z}&\qw  & &\qw&\gate{Z}&\qw & &\qw \\
&\ket{\phi_\mathrm{in}}_3 &\qw  &\gate{Z}&\qw & &\qw&\gate{Z}&\qw & &\qw \\
&\ket{\phi_\mathrm{in}}_4 &\gate{\mathrm{H}}  &\gate{Z}&\gate{\mathrm{H}} & &\gate{\mathrm{H}} &\gate{Z}&\gate{\mathrm{H}} & &\qw
\end{quantikz}
\caption{The final quantum circuit for a (4+1)-qubit example system, where the initial state has been prepared as 
$\ket{\psi_\mathrm{initial}} = \frac{1}{\sqrt{2}}(\ket{0} + \ket{1}) \ket{\phi_\mathrm{in}})$. Here the first qubit is the clock qubit, and part $B$ consists qubits 1 and 4, while part $A$ consists of qubits 2 and 3. The initial state $\ket{\psi_\mathrm{initial}}$ can be prepared by single-qubit rotations. By applying Hadamard gates before and after the globally controlled-$ZZZZ$ gate for qubits in part $B$, a controlled-$XZZX$ is implemented.  As single-qubit $Z$ commutes with $e^{-\imag H T}$, the $Z$ operations cancel out for qubits in block $A$.
}
\label{fig:final_circuit_reduced}
\end{figure*}

Our protocol features a rather efficient verification strategy, but for it to be practical, the prover must be able to prepare  $O(1)$ copies of the single-step history state of the $ZZ$-type quantum simulation. A simple approach is to run the time-independent Hamiltonian evolution generated by
\begin{equation}
    H_\mathrm{prep} = \ket{0}\bra{0} \otimes I + \ket{1}\bra{1} \otimes H,
\end{equation}
giving
\begin{equation}
    \exp(-\imag H_\mathrm{prep} T) \left[\frac{1}{\sqrt{2}} (\ket{0}+\ket{1}) \ket{\phi}\right]
    = \ket{\psi_\mathrm{hist}}.
\end{equation}
However, $H_\mathrm{prep}$ contains 3-body interaction terms. It is possible for near-term devices to implement a 3-body Hamiltonian (see for example Refs.~\cite{buchler2007three, menke2022demonstration, andrade2022engineering}),
but it may be challenging to realize $H_\mathrm{prep}$ in this way.

To circumvent the hardness of implementing 3-body interactions, we propose an echo-based method for preparing history states using only 1-qubit and 2-qubit operations.

One can easily prepare the history state of $H$ by running a half-$T$ evolution of $H$ from the state
\begin{equation}
    \ket{\psi'_\mathrm{hist}} \propto \ket{0}\exp(\tfrac{\imag}{2} H T) \ket{\phi_\mathrm{in}} + \ket{1}\exp(-\tfrac{\imag}{2} H T)\ket{\phi_\mathrm{in}}.
\end{equation}

The state $\ket{\psi'_\mathrm{hist}}$ can be prepared as follows. Since $H$ involves nearest-neighbor $ZZ$ interactions in a square lattice, one can divide all qubits into two parts such that every $ZZ$ term acts on qubits from different parts, as shown in \cref{fig:square}.
Call the filled dots part $A$, and the non-filled dots part $B$. Apply $\mathrm{CNOT}_B$ gates before and after a $T/2$ time evolution, where $\mathrm{CNOT}_B$ is controlled by the clock qubit and acts on the whole part $B$, followed by an $X$ operation (denoted by $X_0$) on the clock qubit. This gives the state (up to normalization)
\begin{equation}
\begin{aligned}
    &X_0 \mathrm{CNOT}_B\exp(-\tfrac{\imag}{2} H T)\mathrm{CNOT}_B (\ket{0}+ \ket{1})\ket{\phi_\mathrm{in}}\\
    =&\ket{1}\exp(-\tfrac{\imag}{2} H T)\ket{\phi_\mathrm{in}} + \ket{0} X_B \exp(-\tfrac{\imag}{2} H T) X_B \ket{\phi_\mathrm{in}}\\
    =&\ket{0}\exp(\tfrac{\imag}{2} H_2 T)\ket{\phi_\mathrm{in}} + \ket{1} \exp(-\tfrac{\imag}{2} H_2 T) \ket{\phi_\mathrm{in}},
\end{aligned}    
\end{equation}
where $X_B$ denotes $X$ operators acting on all qubits of part $B$.

One might be concerned that applying CNOT gates on only \emph{half} of the lattice could be difficult with a near-term device. However, one can implement $\mathrm{CNOT}_B$ using only a global controlled-$Z$ ($\mathrm{C}Z$) operator and local Hadamard operators $\mathrm{H}$. For all qubits in $B$, we perform the operation $\mathrm{H}\cdot CZ \cdot \mathrm{H}$, which is exactly a $\mathrm{CNOT}_B$ gate. For qubits in $A$, we do not apply Hadamard operators, so the controlled-$Z$ operation only adds a phase to the second state. This phase is canceled out in the end, because this effective $\mathrm{CNOT}_B$ operation is performed twice, and $Z^2 = I$.

Note that this echo approach works for more general Ising-type Hamiltonians, although they might not be easy to verify. A more general discussion can be found in \cref{supp:echo_general}. 

In summary, to realize the proposed protocol, the experimental platform should have at least $n$ system qubits and be capable of running single-qubit operations, nearest-neighbor $ZZ$ interactions, and a global $CZ$ operation, which is exactly the capability of our mostly analog + $\mathrm{GCZ}$ model of quantum computation. The quantum circuit for a 4-qubit toy model is shown in \cref{fig:final_circuit_reduced}.

\subsection{Prospects for Experimental Implementation}
\label{sec:experiment}

As explained in \cref{sec:echo_trick}, our protocol uses the mostly analog + $\mathrm{GCZ}$ capability, which roughly contains two types of ingredients: first, an analog simulator capable of implementing a $ZZ$-type Hamiltonian along with a limited number of single-qubit rotations and measurements, and, second, a global CZ gate.
The first ingredient is easily accessible in many different hardware platforms including trapped ions, neutral-atom arrays, and superconducting qubits. The second ingredient is not common in hardware architectures for digital quantum computing, but similar ideas have been explored in the context of routing and switching of single- or few-photon signals \cite{MurraySinglePhotonSwitch,RempePhotonSwitch,LiSinglePhotonSwitch} using atomic excitations, and in the case of single-photon-controlled switches \cite{VuleticAllOpticalSwitch,WaksPhotonControlledSwitch}, where a single photon can be used to switch the state of all the photons in a wave packet.

There are two possible ways for the clock qubit to globally turn simulator-qubit $Z$ gates on and off. First, coupling between the clock qubit and the simulator qubits can directly implement the $Z$ gates. Second, the clock qubit can be used to switch classical controls to the simulator qubits on and off. Both implementations are in principle possible, and each of them comes with its own unique set of hardware constraints and challenges, which we describe below.

\begin{figure}
    \centering
    \begin{tikzpicture}
    \draw[] (0,0) ellipse (1.2 and 0.2) node(bus){};
    \filldraw[black] (-0.8,0) circle (0pt) node[anchor=west]{Bus(Clock)};
    
    \filldraw (-1,-0.5) circle (3pt) node[anchor=north](q1){Qubit 1};
    \filldraw (1,-0.5) circle (3pt)node[anchor=north](q2){Qubit 2};
    \filldraw (-1,0.5) circle (3pt)node[anchor=south](q3){Qubit 3};
    \filldraw (1,0.5) circle (3pt)node[anchor=south](q4){Qubit 4};
    \filldraw (-2,0) circle (3pt)node[anchor=south](q5){Qubit 5};
    \filldraw (2,0) circle (3pt)node[anchor=south](q6){Qubits ...};

    \draw[black, thick] (q1.north) -- (-0.2,-0.2);
    \draw[black, thick] (q2.north) -- (0.2,-0.2);
    \draw[black, thick] (q3.south) -- (-0.2,0.2);
    \draw[black, thick] (q4.south) -- (0.2,0.2);
    \draw[black, thick] (q5.south) -- (-1.2,0);
    \draw[black, thick] (q6.south) -- (1.2,0);
    
    \end{tikzpicture}
    \caption{The ``bus" scheme for realizing a global $CZ$ gate. All simulation qubits are only coupled with the central ``bus" cavity mode, which behaves effectively as the clock qubit. Both the global $CZ$ gate and the $ZZ+Z$ interaction between simulation qubits can be mediated via the bus mode.}
    \label{fig:bus}
\end{figure}
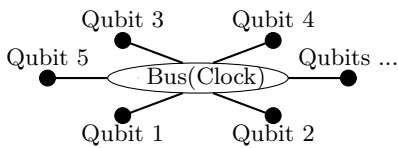

In the first method, the clock qubit itself is the signal that gives rise to $Z$ rotations of the simulator qubits. This requires the clock qubit to interact directly with \textit{all} of the simulator qubits. Therefore, this method requires the existence of a single global ``bus'' degree of freedom (e.g., a qubit or bosonic mode) that contains the two clock-qubit states $\ket{0}$ to $\ket{1}$, shown schematically in \cref{fig:bus}. The bus mode must interact strongly with each of the system qubits so that a single excitation/photon (or few photons) in the bus mode can produce a significant effect. Such interactions are possible if all of the simulator qubits are strongly coupled to a single cavity mode. Furthermore, the bus-system interaction cannot be resonant (i.e., cannot involve direct absorption of excitations in the bus); otherwise a $Z$ gate on $N$ system qubits cannot be achieved without $M \geq N$ excitations of the bus mode. The bus-system interaction must, instead, be dispersive so that the presence of excitations in the bus mode gives rise to phase shifts of simulator qubits.

While atom-cavity interactions in the single-photon strong-coupling regime are possible in atomic cavity QED, coupling strengths in the so-called strong dispersive regime which are strong enough to produce an off-resonant $CZ$ gate with only a few photons are typically only achievable with superconducting qubits coupled to microwave cavities and superconducting qubits \cite{BlaisRevModPhys}. 
Alternative implementations may also exist using a confined phonon mode, such as the vibrational modes of an ion trap \cite{monroe21}. 
Lastly, making use of strong dispersive couplings to implement a globally controlled $Z$ gate between a single excitation in the bus cavity and all of the system qubits would require building the entire simulator inside or attached to the single bus cavity. While this is certainly possible in principle, it is a bespoke feature that would need to be incorporated into the simulator as part of its initial design.

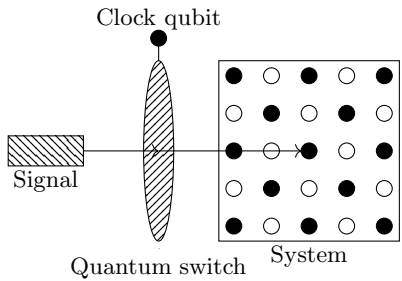
\begin{figure}
    \centering
    \begin{tikzpicture}
    \filldraw[pattern=north east lines] (-2,0) ellipse (0.2 and 1.2) node(switch){};
    \filldraw[black] (-2,-1.3) circle (0pt) node[anchor=north](switch){Quantum switch};

    \filldraw (-2,1.5) circle (3pt) node[anchor=south]{Clock qubit};
    \draw [-] (-2,1.5) -- (-2,1.2);
    
    \filldraw[pattern=north west lines] (-4,-0.2) rectangle (-3,0.2) node[anchor=north](source){};
    \node[] at (-3.5,-0.4) {Signal};

    \draw [->] (-3,0) -- (-2,0);
    \draw [->] (-2,0) -- (-0.1,0);
    \draw (-1.2,-1.2) rectangle (1.2,1.2);
    \filldraw[black] (0,0) circle (0pt) node[anchor=north]{};
    \node[] at (0,-1.4) {System};

    \filldraw (0,0) circle (3pt);
    \filldraw (0.5,0.5) circle (3pt);
    \filldraw (0.5,-0.5) circle (3pt);
    \filldraw (-0.5,0.5) circle (3pt);
    \filldraw (-0.5,-0.5) circle (3pt);
    \filldraw (1,1) circle (3pt);
    \filldraw (-1,1) circle (3pt);
    \filldraw (-1,-1) circle (3pt);
    \filldraw (1,-1) circle (3pt);
    \filldraw (0,-1) circle (3pt);
    \filldraw (0,1) circle (3pt);
    \filldraw (1,0) circle (3pt);
    \filldraw (-1,0) circle (3pt);
    
    \draw (0.5,0) circle (3pt);
    \draw (-0.5,0) circle (3pt);
    \draw (0,0.5) circle (3pt);
    \draw (0,-0.5) circle (3pt);
    \draw (1,0.5) circle (3pt);
    \draw (-1,-0.5) circle (3pt);
    \draw (0.5,1) circle (3pt);
    \draw (0.5,-1) circle (3pt);
    \draw (-0.5,1) circle (3pt);
    \draw (-0.5,-1) circle (3pt);
    \draw (-1,0.5) circle (3pt);
    \draw (1,-0.5) circle (3pt);
    
    \end{tikzpicture}
    \caption{The quantum switch scheme. Here the simulation qubits are assigned in the square lattice as usual. A photon source gives signals that implement $Z$ operations for each simulation qubit. A high-performance quantum switch, controlled by the clock qubit which could be in superposition, determines whether the signal can be received by simulation qubits or not, which realizes a global $CZ$ gate.}
    \label{fig:switch}
\end{figure}

Because of the stringent hardware constraints for the ``bus'' method of implementing the clock qubit, it is worth considering other methods in which the operation of the clock qubit is more separated from the operation of the quantum simulator under test. Separating these two means that instead of being used to switch the simulator qubits directly, the clock qubit must now switch the \textit{control} signals for single-qubit $Z$ gates on and off. This architecture, shown schematically in \cref{fig:switch}, provides significantly more separation between the design constraints of the simulator and those of the clock qubit, but it requires the clock qubit to control a very high-performance quantum switch. In particular, it is not sufficient to use a classical switch with an extremely low switching energy provided by the clock qubit; instead, the switch itself must be able to exist in a superposition between on and off. Such a superposition switching state is extremely challenging to achieve with large control signals since there are many opportunities to lose a photon (and thus destroy the superposition). Compared to other architectures, superconducting qubits typically require very low switching power---as low as a few photons---so they are a likely candidate for implementation of the necessary quantum switch. For example, a broadband and high-dynamic-range switch such as the one demonstrated in \textcite{WallraffSuperconductingSwitch} could be converted to use, e.g., a galvanically coupled fluxonium qubit \cite{Manucharyan:2009vf} as the switching element. In the optical domain, single-photon controlled switches have been implemented using atomic ensembles \cite{VuleticAllOpticalSwitch} and self-assembled semiconductor quantum dots \cite{WaksPhotonControlledSwitch} as the switching medium.

\section{Summary \& Discussions}
\label{sec:summary}
In summary, we have proposed a novel scheme for demonstrating quantum computational ability based on verification of analog quantum simulation. The verifier in the scheme need only be capable of polynomial-time classical computation.
The prover can be an analog quantum simulator with the additional power of single-qubit operations and a specific global $CZ$ gate, and only needs to be able to prepare a constant number of samples, independent of the system size. Additionally, we assume the prover can perform trusted measurements. We also described some possible near-term experimental implementations of the global $CZ$ gate. 

\textcite{hangleiter_direct_2017} propose another certification scheme that was applied in \cite{bermejo2018architectures} to verify measurement outcomes using only local measurements. The method in \cite{hangleiter_direct_2017} can even verify $\mathsf{BQP}$-complete computation encoded through the Feynman-Kitaev mapping, but it requires $O(n^2)$ samples of the output state for the $ZZ+Z$ Hamiltonian evolution, which is more expensive than our constant-sample-complexity scheme. Our improvement is achieved by a combination of the single-step Feynman-Kitaev encoding and the commuting nature of the $ZZ+Z$ Hamiltonian (or the $ZZ$ Hamiltonian when single-qubit $Z$s are absorbed). In fact, our protocol can verify all commuting Hamiltonians with constant sample complexity if entangled multi-qubit measurements are allowed, but it is unclear whether there are also near-term honest-prover strategies in this more general case. We discuss this in more detail in \cref{supp:echo_general}.

It is worth noting that there can be a tradeoff between the verification cost and the difficulty of experimental realization. Our verification protocol presented in \cref{sec:qaprotocol} does not rely on the condition that in $H$, the coefficient of every term $Z_i Z_j$ is the same ($\pi/4$), but this uniformity makes it possible to simulate the system using $2^{O(\sqrt{n})}$-time classical computation, so that $n=O(\lambda^2)$. If instead the coefficients are randomly selected, then the above simulation is no longer available, and we can conjecture the classical simulation cost to be $2^{\Omega(n)}$, as in \cite{dalzell2020many}. In this case, the number of qubits, the number of single-qubit measurements, and the classical computational cost can all be reduced to $O(\lambda)$---at the cost of more difficult history state preparation---since non-uniform Hamiltonian evolution is in general more challenging.

As the main technical tool of this work, we studied a simplified single-step Feynman-Kitaev construction and developed a scheme to lower bound the output fidelity $F_\mathrm{output}$ (and subsequently the TVD between ideal and experimental distributions) using three parameters. In fact, the lower bound holds for any unitary $U$, but the three parameters may not be efficiently estimatable in general. One might ask if we can simply combine the protocol of \textcite{fitzsimons2018post} with our single-step construction to verify arbitrary quantum operations, such as non-commuting Hamiltonian evolutions or digital quantum circuits. We do not have a definite answer, but this seems difficult for most hard-to-simulate unitaries because they generally decompose into exponentially many Pauli terms and, unlike $ZZ+Z$ or $ZZ$ Hamiltonian evolution, their \emph{de facto} measurement outcome cannot be efficiently deduced from $\mathrm{poly}(\lambda)$ single-qubit measurements. 

Experimental implementation of the protocol would be of significant interest. Although it might be difficult to implement quantum communication in the adversarial scenario, our protocol could be a useful tool for experimentalists to benchmark the quality of their devices, because the quality of initial state preparation and that of Hamiltonian evolution can be estimated separately and precisely. As shown in \cref{supp:relate_yikai}, if the noise pattern is known to be fully stochastic instead of coherent, the experimentalist only needs to achieve output fidelity $0.708$, which is significantly easier than the bound of $0.915$ in the fully coherent case.

Finally, our approach may have applications to realizing near-term quantum advantage even in devices capable of digital quantum computation. Reconfigurable atom arrays \cite{beugnon07,bluvstein2023logical} may be one such system.
In these arrays, physical qubits (realized by individual neutral atoms controlled by optical tweezers) can be moved accurately on the 2-D plane in parallel, and transversal $CZ$ gates are available. Therefore, our global $CZ$ gate can be implemented as follows. One can first prepare a large $n$-qubit GHZ state that behaves as the clock qubit. The GHZ state preparation can be implemented by either performing a sequence of CNOT gates or using constant-depth unitary operations interleaved with measurements and classical computations \cite{watts2019exponential}. One can then can move all qubits in the GHZ state such that every system qubit pairs with a GHZ qubit. Next, using the Levine-Pichler gate \cite{levine2019parallel}, $CZ$ gates can be implemented in parallel for every pair of system and GHZ qubits, effectively implementing the global $CZ$ acting on all system qubits. There is also a multi-step solution to mitigate the hardness of GHZ preparation: since our system is a $\sqrt{n}\times \sqrt{n}$ square lattice, it suffices to prepare a 1-D $\sqrt{n}$-qubit GHZ state, and apply the transversal $CZ$ gate  $\sqrt{n}$ times to achieve the same global $CZ$ gate. This proposal is depicted in \cref{fig:reconfig}.

While
\emph{digital} reconfigurable atom arrays are capable of even more powerful quantum operations than the mostly-analog + $\mathrm{GCZ}$ commuting model,
it may still be worth performing our proposed experiment using Rydberg atoms. Running our verification protocol gives several quantitative performance measures ($F_\mathrm{in}$ and $|\langle O_{10} \rangle|^2$),
and can thus be used to benchmark the performance of this fast-developing platform in a sample-efficient manner.

\begin{figure}
    \centering
    \begin{tikzpicture}

    \filldraw[gray] (-1.2,1.2) circle (3pt);
    \filldraw[gray] (-0.7,1.2) circle (3pt);
    \filldraw[gray] (-0.2,1.2) circle (3pt);
    \filldraw[gray] (0.3,1.2) circle (3pt);
    \filldraw[gray] (0.8,1.2) circle (3pt);
    
    \draw[rounded corners] (-1.35, 0.8) rectangle (-0.86, 1.35) {};
    \draw[rounded corners] (-0.85, 0.8) rectangle (-0.36, 1.35) {};
    \draw[rounded corners] (-0.35, 0.8) rectangle (0.14, 1.35) {};
    \draw[rounded corners] (0.15, 0.8) rectangle (0.64, 1.35) {};
    \draw[rounded corners] (0.65, 0.8) rectangle (1.14, 1.35) {};
    
    \filldraw (0,0) circle (3pt);
    \filldraw (0.5,0.5) circle (3pt);
    \filldraw (0.5,-0.5) circle (3pt);
    \filldraw (-0.5,0.5) circle (3pt);
    \filldraw (-0.5,-0.5) circle (3pt);
    \filldraw (1,1) circle (3pt);
    \filldraw (-1,1) circle (3pt);
    \filldraw (-1,-1) circle (3pt);
    \filldraw (1,-1) circle (3pt);
    \filldraw (0,-1) circle (3pt);
    \filldraw (0,1) circle (3pt);
    \filldraw (1,0) circle (3pt);
    \filldraw (-1,0) circle (3pt);
    
    \draw (0.5,0) circle (3pt);
    \draw (-0.5,0) circle (3pt);
    \draw (0,0.5) circle (3pt);
    \draw (0,-0.5) circle (3pt);
    \draw (1,0.5) circle (3pt);
    \draw (-1,-0.5) circle (3pt);
    \draw (0.5,1) circle (3pt);
    \draw (0.5,-1) circle (3pt);
    \draw (-0.5,1) circle (3pt);
    \draw (-0.5,-1) circle (3pt);
    \draw (-1,0.5) circle (3pt);
    \draw (1,-0.5) circle (3pt);
    \end{tikzpicture}
    \caption{The reconfigurable atom array scheme. The gray dots represent qubits in a $\sqrt{n}$-qubit GHZ state. Local $CZ$ gates can be realized between pairs of GHZ qubits and system qubits in parallel. Then the GHZ qubits are moved down to the next row and the parallel $CZ$ gates are repeated.}
    \label{fig:reconfig}
\end{figure}
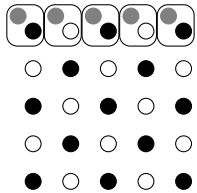

\section*{Acknowledgments}
We thank Wentai Deng and Ruozhen Gong for helpful discussions.
This work received support from the National Science Foundation (QLCI grant OMA-2120757).
Z.L. acknowledges financial support by the QuICS Lanczos Graduate Fellowship. D.D.\ acknowledges support by the NSF GRFP under Grant No.~DGE-1840340, an LPS Quantum Graduate Fellowship, and the U.S. Department of Energy, Office of Science, Office of Advanced Scientific Computing Research, Quantum Testbed Pathfinder program (award number DE-SC0019040).
A.V.G.\ was also supported in part by AFOSR, DoE ASCR Accelerated Research in Quantum Computing program (award No.~DE-SC0020312), DoE ASCR Quantum Testbed Pathfinder program (awards No.~DE-SC0019040 and No.~DE-SC0024220),  NSF STAQ program,  AFOSR MURI, and DARPA SAVaNT ADVENT. Support is also acknowledged from the U.S.~Department of Energy, Office of Science, National Quantum Information Science Research Centers, Quantum Systems Accelerator. 
D.H. acknowledges funding from the US department of defense through a QuICS Hartree fellowship. 
Y.K.L.\ acknowledges support from NIST, and from AFOSR MURI Scalable Certification of Quantum Computing Devices and Networks.

\bibliographystyle{unsrtnat}
\bibliography{mainbib}

\clearpage

\appendix

\renewcommand{\thesection}{S\arabic{section}}
\renewcommand{\thesubsection}{\arabic{subsection}}
\renewcommand{\theequation}{\thesection.\arabic{equation}}

\makeatletter
\renewcommand*{\p@subsection}{\thesection.}
\makeatother

\onecolumngrid

\section*{Appendices}
In these appendices,
we present details omitted from the main text.
In \cref{supp:conjectures}, we state and briefly explain the conjectures used to establish computational hardness \cite{bermejo2018architectures,ringbauer2023verifiable}.
In \cref{supp:relate_yikai}, we lower bound the output fidelity and the total variation distance between distributions using the parameters in our verification scheme.
In \cref{supp:noisy}, we discuss noisy measurements and estimate the noise rate that both verification and sampling can tolerate.
In \cref{supp:martingale}, we discuss an additional soundness property of our protocol against correlated output states using martingale inequalities.
In \cref{supp:echo_general}, we generalize the echo method presented in the main text to more general Ising-type Hamiltonians.

\section{Conjectures for the classical hardness}
\label{supp:conjectures}

As mentioned in the main text, the classical hardness of $X$-basis sampling from a state produced by $(ZZ+Z)$-Hamiltonian evolution is proven in \textcite{bermejo2018architectures} and \textcite{ringbauer2023verifiable} under several plausible conjectures, which we review in this section.

\begin{conjecture}[Polynomial Hierarchy---Conjecture 1 in \cite{bermejo2018architectures}]
\label{conj1}
The polynomial hierarchy is infinite.
\end{conjecture}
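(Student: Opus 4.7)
The statement that the polynomial hierarchy ($\mathsf{PH}$) is infinite is one of the most famous open problems in computational complexity, and I do not expect to give an honest proof proposal---no currently available technique is known that could establish it. The plan, therefore, is to describe what a proof would have to accomplish and to identify why every natural approach stalls. At the outset I would recall the standard characterization: $\mathsf{PH}$ collapses to level $k$ iff $\Sigma_k^p = \Pi_k^p$ (equivalently $\Sigma_k^p = \Sigma_{k+1}^p$), so to prove $\mathsf{PH}$ is infinite one must separate every pair of adjacent levels. The $k = 0$ case is already the $\mathsf{P}$ vs.\ $\mathsf{NP}$ problem, so any proof of the conjecture immediately implies $\mathsf{P}\neq\mathsf{NP}$ and in fact $\Sigma_k^p \neq \Sigma_{k+1}^p$ for every $k$.

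Consequently, any plausible route must overcome all the standard lower-bound barriers simultaneously: relativization (Baker--Gill--Solovay), algebrization (Aaronson--Wigderson), and natural proofs (Razborov--Rudich). In fact, there are oracles relative to which $\mathsf{PH}$ collapses at every finite level, so purely relativizing techniques are provably insufficient even to get started. The strongest unconditional evidence we have is an oracle separation: H\aa stad's switching lemma, which shows that parity requires exponential-size constant-depth circuits, yields an oracle relative to which $\mathsf{PH}$ is strictly infinite, and later work of Rossman--Servedio--Tan strengthens this to an average-case separation at every level. A serious plan would therefore try to lift such $\mathsf{AC}^0$-style lower bounds to a non-relativizing setting, perhaps through an arithmetic or algebraic formulation of quantified Boolean formulas; this lifting step is where every attempt has failed and is the main obstacle.

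The hard part, then, is not choosing a first step but confronting the fact that no known circuit lower-bound technique cleanly separates even $\mathsf{NEXP}$ from $\mathsf{ACC}^0$, let alone all levels of $\mathsf{PH}$. In the context of the present paper this statement is used purely as a hardness hypothesis---the same role it plays in essentially every post-Toda argument deriving classical hardness of sampling from $\#\mathsf{P}$-hardness of output probabilities. The appropriate course of action, which I would follow here, is to adopt \cref{conj1} as an axiom, document the strong consensus evidence (oracle separations, decades of failed collapse attempts, consistency with established circuit lower bounds such as Smolensky's), and proceed to chain it with the anticoncentration and average-case $\#\mathsf{P}$-hardness conjectures that constitute the rest of the hardness argument for the Bermejo-Vega et al.\ sampling problem.
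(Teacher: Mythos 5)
You have correctly recognized that this statement is an open conjecture that the paper (following \textcite{bermejo2018architectures}) does not prove but adopts as a hardness hypothesis, which is exactly how the paper treats it. Your decision to take it as an axiom and chain it with the average-case $\#\mathsf{P}$-hardness and anticoncentration conjectures matches the paper's use of it precisely.
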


The second conjecture considers the hardness of a random nearest-neighbor Ising model on an $n \times m$
square lattice where $m$ grows at least linearly with $n$, with the Hamiltonian
\begin{equation}
    H^{(\alpha, \beta)}=\sum_{i,j}\frac{\pi}{4}Z_i Z_j - \sum_i h_i^{(\alpha,\beta)}Z_i,
\end{equation}
where $h_i^{(\alpha,\beta)} = \frac{\pi}{4}-\frac{\alpha_i + \beta_i}{2}$ with $\alpha_i \in \{0,\pi\}, \beta_i \in \{0,\pi/4\}$ chosen uniformly at random.

\begin{conjecture}[Average-case complexity---Conjecture 2 in \cite{bermejo2018architectures} and conjecture in \cite{ringbauer2023verifiable}]
\label{conj2}
Let $Z^{(\alpha,\beta)} := \Tr\bigl(\smash{e^{\imag H^{(\alpha, \beta)}}}\bigr)$. Approximating $|Z^{(\alpha,\beta)}|^2$ up to relative error $1/4 + o(1)$ for any 0.001 fraction of the field configurations is \#P-hard.
\end{conjecture}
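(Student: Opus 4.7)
The plan is to follow the standard worst-case to average-case reduction for partition functions, in the spirit of Lipton's argument for the permanent and its adaptations in the hardness-of-sampling literature for quantum supremacy. The high-level goal is to show that any algorithm that approximates $|Z^{(\alpha,\beta)}|^2$ on an arbitrary $0.001$ fraction of field configurations can be bootstrapped into an algorithm that exactly computes $|Z^{(\alpha,\beta)}|^2$ on a worst-case configuration, and then to invoke known \#P-hardness of exact evaluation of the 2D Ising partition function at complex couplings (which in turn reduces from, e.g., the permanent or \#SAT).

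First I would fix an adversarial configuration $(\alpha^*,\beta^*)$ and, relaxing the discrete $\alpha_i,\beta_i$ domain, define a one-parameter algebraic curve in the space of field configurations that passes through $(\alpha^*,\beta^*)$ at $t=0$ and, for a random $t$ drawn from an appropriate distribution, is distributed as in the conjecture. Because $Z^{(\alpha,\beta)}$ is a sum of $2^n$ exponentials of the form $\prod_i e^{\imag h_i (-1)^{x_i}}$, it is a polynomial of degree $O(n)$ in the variables $e^{\imag \alpha_i}$ and $e^{\imag \beta_i}$, so $|Z|^2 = Z\bar Z$ along the curve is a polynomial in $t$ of degree $\poly(n)$. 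I would then query the oracle at $\poly(n)$ random values of $t$, obtaining purported evaluations of $|Z^{(\alpha(t),\beta(t))}|^2$, run a robust polynomial reconstruction (Berlekamp--Welch for generous agreement, or Sudan-style list-decoding of Reed--Solomon otherwise), and read off the value at $t=0$.

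The main obstacle, and the reason this is stated as a conjecture rather than a theorem, is the combination of three difficulties that no single existing technique clears simultaneously. The oracle is correct on only a $0.001$ fraction of inputs, which is well below the unique-decoding radius, so list-decoding plus a voting or self-correction step is required to isolate the correct polynomial from the decoded list. The guarantee is only relative error $1/4 + o(1)$, which is informative only under an \emph{anti-concentration} bound on $|Z^{(\alpha,\beta)}|^2$ in the specified random ensemble, so that multiplicative error translates into the small additive error that polynomial reconstruction can absorb. Finally, worst-case \#P-hardness of exact $|Z|^2$ evaluation for this precise coupling structure must be established along the chosen curve, which may force the interpolation to leave the physical parameter range and then be analytically continued back.

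I expect the anti-concentration step to be the primary bottleneck: unlike random circuit sampling, where Porter--Thomas-like statistics give anti-concentration for free, here the commuting $ZZ$ structure and the very coarse discrete field distribution plausibly introduce sharp peaks in $|Z^{(\alpha,\beta)}|^2$. My plan would be to attack this step first via a second-moment analysis exploiting the uniform product distribution on $\alpha_i \in \{0,\pi\}$ and $\beta_i \in \{0,\pi/4\}$ and the 2D lattice transfer-matrix structure, and only then to assemble the interpolation, list-decoding, and worst-case-hardness ingredients into a complete reduction.
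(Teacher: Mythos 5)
The statement you were asked about is not a theorem in this paper: it is \cref{conj2}, an explicitly unproven \emph{conjecture} (imported from \textcite{bermejo2018architectures} and \textcite{ringbauer2023verifiable}) that the authors assume in order to establish classical hardness. The paper contains no proof of it, so there is nothing to compare your argument against on the paper's side --- and, to your credit, your own write-up concedes as much when you say this is ``the reason this is stated as a conjecture rather than a theorem.'' What you have produced is a plausible research program, not a proof, and you should be careful not to present it as the latter.

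As for the program itself, the gaps you identify are real and none of them is closed by your sketch. (i) A $0.001$ agreement fraction is far below the unique-decoding radius, so Berlekamp--Welch does not apply and Sudan-style list decoding returns a list from which you have no stated mechanism to select the correct polynomial; the standard fix (a self-correction or local-decoding step) requires an oracle guarantee over a \emph{random} subset, whereas the conjecture quantifies over \emph{any} $0.001$ fraction, which is a strictly adversarial and harder setting. (ii) Relative-error guarantees do not survive polynomial interpolation without a quantitative anti-concentration bound for $|Z^{(\alpha,\beta)}|^2$ over the discrete ensemble $\alpha_i\in\{0,\pi\}$, $\beta_i\in\{0,\pi/4\}$, and note that the paper's own anti-concentration statement (\cref{conj3}) is itself only conjectured, so you would be resting one conjecture on another. (iii) The quantity being interpolated is $|Z|^2 = Z\bar{Z}$, and $\bar{Z}$ is not a polynomial in the same complex variables as $Z$; making the product a low-degree polynomial along your curve requires treating $Z$ and its conjugate as independent analytic objects and continuing off the real parameter locus, at which point the oracle (which is only defined on physical field configurations) can no longer be queried. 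Each of these is a known barrier in the hardness-of-sampling literature, which is precisely why the community states average-case hardness of this kind as a conjecture rather than deriving it from $\#\mathrm{P}$-hardness of worst-case exact evaluation.
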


The last conjecture is about anti-concentration of the output distribution. Consider a one-dimensional nearest-neighbor $n$-qubit $\Theta(n)$-depth random circuit
\begin{equation}
    \mathcal{C} = \left[\prod_{i=1}^{n-1}CZ_{i,i+1}\right] \left[\prod_{i=1}^n Z_i^{c_i} e^{-\imag \frac{\pi}{4} d_i Z_i }H_i\right],
\end{equation}
where $c_i,d_i$ are uniformly randomly chosen from $\{0,1\}$ and $H_i$ are Hadamard gates.

\begin{conjecture}[Anti-concentration---Conjecture 3 in \cite{bermejo2018architectures}]
\label{conj3}
For the random circuit $\mathcal{C}$ described above,
\begin{equation}
    \Pr_\mathcal{C} \left( |\langle x|\mathcal{C}|0\rangle^{\otimes n}|^2 \geq \frac{1}{2^n} \right) \geq \frac{1}{e}
\end{equation}
for any binary string $x \in \{0,1\}^n$.
\end{conjecture}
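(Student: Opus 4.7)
The plan is to establish the anti-concentration bound via the second-moment (Paley--Zygmund) method, exploiting the IQP-like structure of the circuit. Let $p_x := |\langle x|\mathcal{C}|0\rangle^{\otimes n}|^2$, viewed as a random variable over the choice of $\{c_i, d_i\}$. First I would show $\mathbb{E}_\mathcal{C}[p_x] = 2^{-n}$, which follows because averaging over $c_i \in \{0,1\}$ implements a Pauli-$Z$ twirl that, together with the Hadamards applied to $|0\rangle^{\otimes n}$, renders the expected output distribution uniform in the computational basis.

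Next I would bound the second moment $\mathbb{E}_\mathcal{C}[p_x^2] \leq C \cdot 2^{-2n}$ with a small constant $C$. Expanding $p_x^2$ yields a quadruple sum over computational basis strings $(y_1,y_2,y_3,y_4)$. Averaging over the Pauli-$Z$ gates $Z^{c_i}$ zeros out every term except those satisfying $y_1 \oplus y_2 \oplus y_3 \oplus y_4 = 0$ coordinate-wise, collapsing the quadruple sum to a pairing of $(y_1,y_2)$ (either identity or swap at each site). The remaining phases originate from the $T$-like gates $e^{-\imag \pi d_i Z/4}$ and the nearest-neighbor $CZ$ layer, and depend only on the overlap pattern of $y_1$ and $y_2$; subsequently averaging over $d_i \in \{0,1\}$ enforces a parity constraint that reduces the calculation to a classical partition function on the 1D line, which I would evaluate by transfer-matrix methods.

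The main obstacle is that reaching the precise constant $1/e$ is not possible via Paley--Zygmund alone, since the standard inequality $\Pr[p_x > (1-\epsilon) 2^{-n}] \geq \epsilon^2 \cdot 2^{-2n} / \mathbb{E}[p_x^2]$ degrades to $0$ at the mean ($\epsilon \to 0$). The target $1/e$ is precisely the tail probability of the exponential (Porter--Thomas) distribution at its mean, so the natural strategy is to show that $p_x \cdot 2^n$ converges, in a sufficiently strong sense, to $\mathrm{Exp}(1)$ as $n \to \infty$. This could be accomplished by establishing that the $\Theta(n)$-depth random circuit forms an approximate unitary $2$-design when restricted to the relevant sector, or by a moment-matching argument pushed to enough orders to control the $\Pr[p_x \geq 2^{-n}]$ tail directly via a Chebyshev-plus-higher-moment inclusion--exclusion.

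Adapting the anti-concentration arguments for related random IQP circuits in \cite{bremner2016average} to the specific 1D gate set (Hadamard, $Z$, $T$-like, nearest-neighbor $CZ$) appears to be the most tractable route. The hardest part is controlling the higher moments with only this restricted gate set: unlike Haar-random unitaries, random IQP circuits do not form exact $t$-designs, so one must carefully track how the $1$D $CZ$-induced correlations propagate through the $\Theta(n)$ layers and verify that they mix sufficiently in the relevant subspace to suppress $\mathbb{E}[p_x^k]$ down to its Porter--Thomas value up to the needed precision.
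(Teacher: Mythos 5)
This statement is presented in the paper as a \emph{conjecture} (Conjecture~3 of \textcite{bermejo2018architectures}), restated in the appendix precisely because it is an unproven computational assumption underlying the hardness result; the paper offers no proof, and none is known. Your proposal is therefore attempting something the literature has not achieved, and you should be clear that what you have written is a research program, not a proof.

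The concrete gap is one you yourself identify and then do not close. The second-moment computation (mean $2^{-n}$, second moment $O(2^{-2n})$ via the Pauli-$Z$ twirl and the pairing/parity reduction) is plausible and would yield, via Paley--Zygmund, a statement of the form $\Pr[p_x \geq \alpha 2^{-n}] \geq (1-\alpha)^2/C$ for $\alpha < 1$ --- a genuinely weaker anti-concentration bound that degenerates to the trivial bound at $\alpha = 1$. The conjectured constant $1/e$ at threshold exactly $2^{-n}$ is the Porter--Thomas tail value, and obtaining it requires controlling the full limiting distribution of $2^n p_x$, not just two (or any fixed number of) moments: moment matching up to order $k$ only pins the tail probability to within an error that one must separately show vanishes, and for this specific 1D gate set no approximate $t$-design property of the required strength is known. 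An approximate unitary 2-design claim, even if established, would again only control the second moment and cannot distinguish a Porter--Thomas distribution from other distributions with the same first two moments but a different mass at the mean. So the final step of your outline --- ``show $2^n p_x \to \mathrm{Exp}(1)$ in a sufficiently strong sense'' --- is not a technique but a restatement of the conjecture itself. If you want a provable statement, the honest move is to weaken the conjecture to $\Pr[p_x \geq \alpha 2^{-n}] \geq (1-\alpha)^2/C$ for some explicit $\alpha < 1$ and constant $C$ obtained from your transfer-matrix evaluation of the second moment, and then check whether the downstream hardness argument of \textcite{bermejo2018architectures,ringbauer2023verifiable} tolerates that weaker form.
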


\section{Relating the parameters to the total variation distance}
\label{supp:relate_yikai}

In this appendix, we derive an upper bound on the total variation distance of interest, $\TVD(P_\mathrm{ideal}, P_\mathrm{real})$, in terms of the parameters $F_\mathrm{in}$, $|\langle O_{10}\rangle|^2$, and $p_\mathrm{samp}$. We use the same definition of $\rho$ and $\ket{\psi_i}$ as in \cref{eqn-rho-diag,eqn-psi-diag}.

First, we relate the TVD and the output fidelity
\begin{equation}
    F_\mathrm{out}(\rho) := \frac{ \sum_i p_i |\beta_i|^2 \left| \langle \phi'_i|U|\phi_\mathrm{in} \rangle \right|^2} {\sum_i p_i |\beta_i|^2}.
\end{equation}
This is the fidelity between the state used for sampling, $\rho$, and $U\ket{\phi_\mathrm{in}}$, since the state corresponding to the ``output" of the computation is $\ket{\phi'_i}$ for all $i$.

In the second step, we derive a lower bound on the state fidelity in terms of the parameters. We lower bound $F_\mathrm{out}(\rho)$ using only the parameters $\Tr[\rho O_{10}]$ and $F_\mathrm{in}(\rho)$. We find
\begin{equation}
    F_\mathrm{out}(\rho) \geq 16 |{\Tr[\rho O_{10}]}|^2 + 3F_\mathrm{in}(\rho) - 6
\end{equation}
up to higher-order terms.
As a sanity check, if the history state is perfectly prepared, both $|{\Tr[\rho O_{10}]}|^2$ and $F_\mathrm{in}$ should take their maximum values, which are $1/4$ and $1$ (as shown later in this section), giving $F_\mathrm{out}=1$ as expected.

\subsection{Relating the total variation distance to the output fidelity}

To demonstrate quantum advantage, we generate samples from the desired distribution $P_\mathrm{ideal}$ defined by $U\ket{\phi_\mathrm{in}}$ with total variation distance (TVD) less than $\delta = 0.292$ as per \textcite{ringbauer2023verifiable}. 
Therefore, we would like to relate the fidelity $F_\mathrm{out}$ obtained from the measurements to the distance between the distribution $P_\mathrm{real}$ corresponding to the classical mixture of $\ket{\phi'_i}$s (i.e., $\sum_i p_i |\beta_i|^2 \ket{\phi'_i}\bra{\phi'_i}$) and the ideal distribution $P_\mathrm{ideal}$.

Let $\|\cdot\|_{\Tr}$ be the trace norm (Schatten 1-norm). The TVD between probability distributions generated by measurements on quantum states is upper bounded by the trace distance between those states, which is in turn related to the fidelity:
\begin{equation}\label{eq:tvd-fidelity}
\begin{aligned}
    \TVD(P_\mathrm{ideal}, P_\mathrm{real})
    &\leq \frac{1}{2}  \left\|U\ket{\phi_\mathrm{in}}\bra{\phi_\mathrm{in}}U^\dagger - \sum_i p_i |\beta_i|^2 \ket{\phi'_i}\bra{\phi'_i} \right\|_\mathrm{Tr}\\
    &\leq \sqrt{1-F\left(U\ket{\phi_\mathrm{in}}\bra{\phi_\mathrm{in}}U^\dagger, \sum_i p_i |\beta_i|^2 \ket{\phi'_i}\bra{\phi'_i}\right)}\\
    &=\sqrt{1-\sum_i p_i |\beta_i|^2 F(\ket{\phi'_i},U\ket{\phi_\mathrm{in}})} = \sqrt{1-F_\mathrm{out}}.
\end{aligned}
\end{equation}
Thus, $\TVD(P_\mathrm{ideal}, P_\mathrm{real}) \leq 0.292$ is satisfied if
\begin{equation}
   \label{eq:actualnumber} F_\mathrm{out} \geq 0.915 > 1 -\delta^2,
\end{equation}
where we use $\delta = 0.292$.

We also observe that the output fidelity requirement can be relaxed to $0.708$ if the noise in the system is known to be fully stochastic. We discuss this in \cref{subsec:relax_yikai}.

\subsection{Lower bounding the output fidelity using the parameters}

As a mathematical tool, we define an inner product based on the (not explicitly known) diagonalization of $\rho$. Suppose $\rho = \sum_{i=1}^{2^{n+1}} p_i \ket{\psi_i}\bra{\psi_i}$ and there exists an integer $N_{\neq 0}>1$ such that $p_i > 0$ for all $1\leq i\leq N_{\neq 0}$ and $p_i = 0$ for all $N_{\neq 0}<i\leq 2^{n+1}$. The inner product $\langle\cdot,\cdot\rangle_\rho$ is defined for the $N_{\neq 0}$-dimensional complex vector space $V=\mathbb{C}^{N_{\neq 0}}$ as
\begin{equation}
    \langle \vec{A}, \vec{B} \rangle_\rho :=\sum_{1\leq i\leq N_{\neq 0}} p_i A_i B^*_i,
\end{equation}
where $\vec{A}:=(A_1,A_2,\dots,A_{N_{\neq 0}})^\mathrm{T}$ and $\vec{B}:=(B_1,B_2,\dots,B_{N_{\neq 0}})^\mathrm{T}$ are vectors in $V$. It is straightforward to verify that for any valid density matrix $\rho$, the vector space $V$ equipped with $\langle\cdot,\cdot \rangle_\rho$ is an inner product space. Therefore, one can define the norm of a vector in $V$ as
\begin{equation}
\|\vec{A}\|^2 := \langle  \vec{A},\vec{A} \rangle_\rho = \sum_i p_i |A_i|^2.
\end{equation}

Next, we define several vectors to help represent the state and the parameters: the \emph{input fidelity vector} $\vec{f}_\mathrm{in}$, the \emph{propagation fidelity vector} $\vec{f}_\mathrm{in}$, the \emph{output fidelity vector} $\vec{f}_\mathrm{out}$, the \emph{$\alpha$ coefficient vector} $\vec{\alpha}$, the \emph{$\beta$ coefficient vector} $\vec{\beta}$, and the \emph{$\gamma$ coefficient vector} $\vec{\gamma}$ for a given mixed state $\rho$, namely
\begin{equation}
    \begin{aligned}
        \vec{f}_\mathrm{in} &:=(\dots,\langle\phi_i|\phi_\mathrm{in}\rangle,\dots)^\mathrm{T},\\
        \vec{f}_\mathrm{prop} &:=(\dots,\langle\phi'_i|U|\phi_i\rangle,\dots)^\mathrm{T},\\
        \vec{f}_\mathrm{out} &:= (\dots,\langle\phi'_i|U|\phi_\mathrm{in}\rangle,\dots)^\mathrm{T},\\
        \vec{\alpha} &:=(\dots,\alpha_i,\dots)^\mathrm{T},\\
        \vec{\beta} &:=(\dots,\beta,\dots)^\mathrm{T},\\
        \vec{\gamma} &:=(\dots, \alpha_i \beta^*_i,\dots)^\mathrm{T},
    \end{aligned}
\end{equation}
respectively.
Note that $\|\vec{\gamma}\|^2 \leq 1/4$ and $\|\vec{f}_\mathrm{in}\|^2,\|\vec{f}_\mathrm{prop}\|^2,\|\vec{f}_\mathrm{out}\|^2\leq 1$ since $|\alpha_i|^2 + |\beta_i|^2 = 1$, $\sum_i p_i = 1$, and fidelities are at most 1.

Observe that $p_\mathrm{samp}$ is the same as $\| \vec{\alpha} \|^2$. Another parameter, $\Tr[\rho O_{10}]$, can be written as the inner product of two of the above vectors:
\begin{equation}
\begin{aligned}
    \Tr[\rho O_{10}] &= \sum_i p_i \alpha_i \beta^*_i \langle\phi'_i | U | \phi_i \rangle  = \langle \vec{\gamma}, \vec{f}_\mathrm{prop}\rangle_\rho.
\end{aligned}
\end{equation}

Using the Cauchy-Schwarz inequality, we find
\begin{equation}
    |{\Tr[\rho O_{10}]}|^2 = |\langle \vec{\gamma},\vec{f}_\mathrm{prop} \rangle|^2 \leq \|\vec{\gamma}\|^2 \|\vec{f}_\mathrm{prop}\|^2 \leq 1/4.
\end{equation}
Since $\|\vec{\gamma}\|^2 \leq 1/4$ and $\|\vec{f}_\mathrm{prop}\|^2\leq 1$, the above inequality implies that
\begin{equation}
    \begin{aligned}
        4\|\vec{\gamma}\|^2 &\geq |{\Tr[\rho O_{10}]}|^2,\\
        \|\vec{f}_\mathrm{prop}\|^2 &\geq 4|{\Tr[\rho O_{10}]}|^2.
    \end{aligned}
\end{equation}
If the prover performs well, then the estimated value $\Tr[\rho O_{10}]$ should be close to $1/4$, $\|\vec{\alpha}\|^2$ should be close to $1/2$, and $F_\mathrm{in}$ should be close to $1$. Therefore we write them as $\Tr[\rho O_{10}] = 1/4 - \epsilon$, $\|\vec{\alpha}\|^2 = 1/2 + \epsilon' =  1-\|\vec{\beta}\|^2$, and $F_\mathrm{in}=1-\epsilon''$, where $\epsilon, \epsilon', \epsilon''$ are all small and $\epsilon,\epsilon'' > 0$. This also implies that $\|\vec{\gamma}\|^2 = \sum_i p_i |\alpha_i|^2 |\beta_i|^2 \geq 1/4-\epsilon$ and $\|\vec{f}_\mathrm{prop}\|^2 \geq 1-4\epsilon$. 

Recall that our final objective is to lower bound $F_\mathrm{out}(\rho)$. We start by giving a lower bound on $\|\vec{f}_\mathrm{in}\|^2$ in terms of $F_\mathrm{in}$.

First, the Cauchy-Schwarz inequality gives
\[
\begin{split}
F_\mathrm{in}(\rho)
&= \frac{1}{\|\vec{\alpha}\|^2} \sum_i p_i |\alpha_i|^2 \left|\braket{\phi_i}{\phi_\mathrm{in}}\right|^2 \\
&\leq \frac{1}{\|\vec{\alpha}\|^2} \sum_i p_i |\alpha_i|^2 \left|\braket{\phi_i}{\phi_\mathrm{in}}\right| \\
&\leq \frac{1}{\|\vec{\alpha}\|^2} \left(\sum_i p_i |\alpha_i|^4 \right)^{1/2} \|\vec{f}_\mathrm{in}\|.
\end{split}
\]
Plugging in the identity $|\alpha_i|^4 = |\alpha_i|^2 - |\alpha_i|^2 |\beta_i|^2$, we get
\[
F_\mathrm{in}(\rho)
\leq \frac{1}{\|\vec{\alpha}\|^2} (\|\vec{\alpha}\|^2 - \|\vec{\gamma}\|^2)^{1/2} \|\vec{f}_\mathrm{in}\|.
\]
As before, suppose that $\|\vec{\gamma}\|^2 = 1/4-\epsilon$ and $\|\vec{\alpha}\|^2 = 1/2 + \epsilon'$. This implies that
\[
\begin{split}
F_\mathrm{in}(\rho)
&\leq \frac{1}{\tfrac{1}{2} + \epsilon'} \left(\tfrac{1}{2} + \epsilon' - \tfrac{1}{4} + \epsilon \right)^{1/2} \|\vec{f}_\mathrm{in}\|.
\end{split}
\]
We can rewrite this as
\[
\|\vec{f}_\mathrm{in}\| \geq \frac{\tfrac{1}{2} + \epsilon'}{\sqrt{\tfrac{1}{4}+\epsilon' + \epsilon}} F_\mathrm{in} = (1-2\epsilon) F_\mathrm{in} + O(\epsilon'^2)+O(\epsilon^2)+O(\epsilon\epsilon').
\]

Next, since $|\langle \phi'_i | U | \phi_\mathrm{in} \rangle|^2 \geq \left| \bra{\phi'_i}U\ket{\phi_i} \right|^2 \left| \braket{\phi_i}{\phi_\mathrm{in}} \right|^2$, we have
\[
F_\mathrm{out}(\rho)
\geq \frac{1}{\|\vec{\beta}\|^2} \sum_i p_i |\beta_i|^2 \left| \bra{\phi'_i}U\ket{\phi_i} \right|^2 \left| \braket{\phi_i}{\phi_\mathrm{in}} \right|^2.
\]
Note that for any $\delta_1, \delta_2 \in [0,1]$, we have $(1-\delta_1) (1-\delta_2) \geq 1 - \delta_1 - \delta_2 = (1-\delta_1) + (1-\delta_2) - 1$. Using this inequality, we can write
\[
\begin{split}
F_\mathrm{out}(\rho)
&\geq \frac{1}{\|\vec{\beta}\|^2} \sum_i p_i |\beta_i|^2 \left( \left| \bra{\phi'_i}U\ket{\phi_i} \right|^2 + \left| \braket{\phi_i}{\phi_\mathrm{in}} \right|^2 - 1 \right) \\
&= -1 + \frac{1}{\|\vec{\beta}\|^2} \sum_i p_i (1-|\alpha_i|^2) \left( \left| \bra{\phi'_i}U\ket{\phi_i} \right|^2 + \left| \braket{\phi_i}{\phi_\mathrm{in}} \right|^2 \right) \\
&= -1 + \frac{1}{\|\vec{\beta}\|^2} \left( \|\vec{f}_\mathrm{prop}\|^2 + \|\vec{f}_\mathrm{in}\|^2 \right) - \frac{1}{\|\vec{\beta}\|^2} \sum_i p_i |\alpha_i|^2 \left| \bra{\phi'_i}U\ket{\phi_i} \right|^2 - \frac{\|\vec{\alpha}\|^2}{\|\vec{\beta}\|^2} F_\mathrm{in}(\rho).
\end{split}
\]
The second-to-last term can be bounded in terms of $\|\vec{f}_\mathrm{prop}\|$, using the same argument we used to relate $F_\mathrm{in}(\rho)$ and $\|\vec{f}_\mathrm{in}\|$. This yields
\begin{equation}
    \begin{aligned}
\frac{1}{\|\vec{\beta}\|^2} \sum_i p_i |\alpha_i|^2 \left| \bra{\phi'_i}U\ket{\phi_i} \right|^2 
\leq \frac{\sqrt{\tfrac{1}{2} + \epsilon' -\tfrac{1}{4} +\epsilon}}{\tfrac{1}{2}-\epsilon'} \|\vec{f}_\mathrm{prop}\|
= (1+4\epsilon'+2\epsilon) \|\vec{f}_\mathrm{prop}\| + O(\epsilon'^2) + O(\epsilon^2) + O(\epsilon\epsilon').
    \end{aligned}
\end{equation}
Plugging this into the preceding equation, we get
\begin{equation}
    \begin{aligned}
F_\mathrm{out}(\rho)
&\geq -1 + \frac{1}{\|\vec{\beta}\|^2} \left( \|\vec{f}_\mathrm{prop}\|^2 + \|\vec{f}_\mathrm{in}\|^2 \right) - \frac{\sqrt{\tfrac{1}{2} + \epsilon' -\tfrac{1}{4} +\epsilon}}{\tfrac{1}{2}-\epsilon'} \|\vec{f}_\mathrm{prop}\| - \frac{\|\vec{\alpha}\|^2}{\|\vec{\beta}\|^2} F_\mathrm{in}(\rho) \\
&\geq -1 + \frac{2}{1-2\epsilon'} \left( \|\vec{f}_\mathrm{prop}\|^2 + \|\vec{f}_\mathrm{in}\|^2 \right) - \frac{\sqrt{\tfrac{1}{2} + \epsilon' -\tfrac{1}{4} +\epsilon}}{\tfrac{1}{2}-\epsilon'} \|\vec{f}_\mathrm{prop}\| - \frac{\tfrac{1}{2}+\epsilon'}{\tfrac{1}{2}-\epsilon'}F_\mathrm{in}(\rho)\\
&= 1-16\epsilon - 3\epsilon'' + \mathrm{h.o.},
    \end{aligned}
\end{equation}
where $\mathrm{h.o.}$ indicates higher-order terms in $\epsilon,\epsilon',\epsilon''$.
Numerically, this first-order approximation of the lower bound has absolute error at the $10^{-3}$ order of magnitude if all of $\epsilon,|\epsilon'|,\epsilon''$ are upper bounded by $0.02$. We have thus established \cref{thm:lowerbounddistance}.

\subsection{Relaxing the fidelity requirement for fully stochastic noise models}
\label{subsec:relax_yikai}

We notice that inequality~\eqref{eq:tvd-fidelity} can be improved to get a bound that approaches
\begin{equation}
\label{eq:tvd-mixed}
    \TVD(P_\mathrm{ideal}, P_\mathrm{real}) \leq 1-F_\mathrm{out}
\end{equation}
in cases where the errors are stochastic rather than coherent. Let $\rho_\mathrm{real}:=\sum_i p_i |\beta_i|^2 \ket{\phi'_i}\bra{\phi'_i}$ be the real state (that is, the state prepared in the experiment), and let $\sigma = \ket{\psi}\bra{\psi}$ be the ideal pure state. The real state $\rho_\mathrm{real}$ has fidelity $F_\mathrm{out} = \bra{\psi}\rho_\mathrm{real}\ket{\psi} = 1-\delta_f$ (where $\delta_f$ is the ``infidelity''). 

Furthermore, assume that $\rho_\mathrm{real}$ is mixed, in the sense that $\Tr(\rho_\mathrm{real}^2) = 1-\delta_p$ (where $\delta_p$ is the ``impurity''). This assumption can be checked by estimating $\Tr(\rho_\mathrm{real}^2)$ using either randomized measurements \cite{elben2023randomized} or the swap test. (The former method is appropriate for small quantum systems where the experimenter has a relatively limited degree of control; the latter method is capable of handling much larger quantum systems, but requires more sophisticated quantum control.)

Define projectors $\Pi_0 := \ket{\psi}\bra{\psi}$ and $\Pi_1 := I-\Pi_0$. Write the state in block-diagonal form as $\rho_\mathrm{real} = \rho_{00} + \rho_{01} + \rho_{10} + \rho_{11}$, where $\rho_{ab} := \Pi_a\rho_\mathrm{real}\Pi_b$ for $a,b \in \lbrace 0,1 \rbrace$.

Let $\|\cdot\|_F$ be the Frobenius norm (i.e., the Schatten 2-norm). Then we can upper bound the trace distance between $\rho_\mathrm{real}$ and $\sigma$ as follows:
\begin{equation}
\begin{split}
\|\rho_\mathrm{real}-\sigma\|_{\Tr} &\leq \|\rho_{00}-\sigma\|_{\Tr} + \|\rho_{11}\|_{\Tr} + \|\rho_{01}\|_{\Tr} + \|\rho_{10}\|_{\Tr} \\
&= 2\delta_f + 2\|\rho_{01}\|_{\Tr}.
\end{split}
\end{equation}
We have
\begin{equation}
\begin{split}
\|\rho_{01}\|_{\Tr} &= \|\rho_{01}\|_F \\
&= \tfrac{1}{\sqrt{2}} (\Tr(\rho_\mathrm{real}^2) - \|\rho_{00}\|_F^2 - \|\rho_{11}\|_F^2)^{1/2} \\
&\leq \tfrac{1}{\sqrt{2}} (\Tr(\rho_\mathrm{real}^2) - \|\rho_{00}\|_F^2)^{1/2} \\
&= \tfrac{1}{\sqrt{2}} (1 - \delta_p - (1-\delta_f)^2)^{1/2} \\
&= \tfrac{1}{\sqrt{2}} (2\delta_f - \delta_f^2 - \delta_p)^{1/2}.
\end{split}
\end{equation}
Therefore,
\begin{equation}
\frac{1}{2} \|\rho_\mathrm{real}-\sigma\|_{\Tr} \leq \delta_f + \sqrt{\delta_f - \delta_f^2/2 - \delta_p/2}.
\end{equation}

This bound can be compared to inequalities~\eqref{eq:tvd-fidelity} and~\eqref{eq:tvd-mixed}. When $\rho$ is a pure state, we have $\delta_p = 0$, so the above bound is roughly $\sqrt{\delta_f}$, which looks like inequality~\eqref{eq:tvd-fidelity}. When $\rho$ is highly mixed, $\delta_p$ can be as large as $\delta_p \approx 2\delta_f - \delta_f^2$, so the above bound is roughly $\delta_f$, which looks like inequality~\eqref{eq:tvd-mixed}. This implies that, when the noise model is known to be fully stochastic, the output state fidelity need only be at least $1-\delta = 0.708$ to demonstrate quantum advantage, according to inequality~\eqref{eq:tvd-mixed}.

\section{Noisy Measurements}
\label{supp:noisy}

In the analysis in the main article, we assume that all measurements are perfect. In this appendix, we discuss the potential negative effects of noisy measurements in both verification and sampling. We also show that the tolerable noise rate in measurements for an $n$-qubit system is $\epsilon \ll 1/n$.

\subsection{Noisy measurements in verification}
Let us first discuss the estimation of $|\langle O_{10}\rangle|^2 = |\langle X\otimes U \rangle + \imag\langle Y\otimes U\rangle|^2$. When $\epsilon \ll 1/n$, the number of erroneous measurements in each estimation of the \emph{de facto} value of $X\otimes U$ or $Y\otimes U$ is much less than 1. Therefore, the mean values measured for both quantities only deviate by up to $n\epsilon \langle X\otimes U \rangle$ and $n\epsilon \langle Y\otimes U \rangle$ due to the measurement errors, leading to constant-factor errors in the estimation of $|\langle O_{10}\rangle|^2$. Hence, the error rate must be sufficiently small, e.g., $\epsilon = \frac{1}{100n}$, such that the estimated value can still be in the range of acceptance.

Similarly, we require the measurement error to be as small as $\frac{1}{100n}$ to estimate $F_\mathrm{in}$ to sufficiently high precision, because the value of $N_\mathrm{in+0}$ could be lowered by $N_\mathrm{M} n \epsilon$ when measuring $N_\mathrm{M}$ samples. This may lead to a constant-factor error (of order $n\epsilon$) in $F_\mathrm{in,M}$.

\subsection{Noisy measurements in sampling}

In the following lemma, we show that we can still sample from a classically intractable distribution if the measurement error is much lower than $1/n$.

\begin{lemma}
    If $F_\mathrm{output}  = 1-\delta_f$, and all measurements have the same error rate $\epsilon \ll 1/n$, then the measurement outcomes sample from a distribution $P_\mathrm{real}$ with $\TVD(P_\mathrm{real}, P_\mathrm{ideal}) \leq \delta' = \sqrt{\delta_f} + O(1)$.
\end{lemma}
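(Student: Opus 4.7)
The plan is to use the triangle inequality for TVD, splitting the error into the intrinsic state-infidelity contribution and the measurement-noise contribution. Let $P_\mathrm{perf}$ denote the distribution that would be obtained by performing \emph{noiseless} $X$-basis measurements (conditioned on the clock outcome) on the actual state $\rho$ prepared by the prover, and $P_\mathrm{real}$ the distribution actually observed through the noisy measurement apparatus. Then by the triangle inequality,
\begin{equation}
\TVD(P_\mathrm{real}, P_\mathrm{ideal}) \leq \TVD(P_\mathrm{real}, P_\mathrm{perf}) + \TVD(P_\mathrm{perf}, P_\mathrm{ideal}).
\end{equation}

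For the second term, I would invoke the chain of inequalities~\eqref{eq:tvd-fidelity} already established in \cref{supp:relate_yikai}: since $P_\mathrm{perf}$ is obtained from measuring the reduced output state $\sum_i p_i|\beta_i|^2\ket{\phi'_i}\bra{\phi'_i}$ (which has fidelity $F_\mathrm{output}=1-\delta_f$ with $U\ket{\phi_\mathrm{in}}$), this term is at most $\sqrt{\delta_f}$.

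For the first term, I would use a simple coupling/union-bound argument on the measurement noise. Since each of the $n$ single-qubit measurements independently produces an incorrect outcome with probability at most $\epsilon$, the probability that the recorded $n$-bit string differs from the string that a noiseless measurement would have produced is at most $n\epsilon$ by the union bound. By the standard coupling characterization of TVD, this immediately gives $\TVD(P_\mathrm{real}, P_\mathrm{perf}) \leq n\epsilon$. Combining the two contributions yields
\begin{equation}
\TVD(P_\mathrm{real}, P_\mathrm{ideal}) \leq \sqrt{\delta_f} + n\epsilon,
\end{equation}
which under the assumption $\epsilon \ll 1/n$ is $\sqrt{\delta_f}+o(1)$, matching the claim (with the understanding that the ``$O(1)$'' in the statement refers to a vanishing additive correction controlled by $n\epsilon$).

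The main obstacle is not a mathematical one but a conceptual/bookkeeping one: making the union bound rigorous requires being careful that the noise model is a standard bit-flip (or more generally, classical post-processing) channel applied independently to each single-qubit outcome, so that the probability of any discrepancy between the noisy and noiseless bitstrings truly sums to at most $n\epsilon$. If the ``error rate'' instead refers to a general $O(\epsilon)$-close quantum channel on each qubit, one should replace the union bound with a hybrid argument over the $n$ noisy measurement operations, invoking the operational meaning of diamond-norm distance to obtain the same $n\epsilon$ scaling. Either way, the key quantitative ingredient is just linearity of error accumulation in $n$, and the tightness of the earlier fidelity-to-TVD bound does the rest.
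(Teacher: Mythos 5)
Your proposal is correct and takes essentially the same route as the paper: the paper conditions on the event that none of the $n$ single-qubit measurements errs (probability $(1-\epsilon)^n \approx 1-n\epsilon$) and uses convexity of the TVD to get $(1-n\epsilon)\sqrt{\delta_f} + n\epsilon$, which is the mixture-decomposition form of your triangle-inequality-plus-coupling argument and yields the same $\sqrt{\delta_f}+n\epsilon$ bound. Your added caveat about whether the noise is a classical post-processing channel versus a general $O(\epsilon)$-close quantum channel is a reasonable refinement that the paper does not address, but it does not change the conclusion.
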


\begin{proof}
    Since there are $n$ Hadamard measurements to be performed, the probability of having no error in the measurements is
    \begin{equation}
        p_\mathrm{measure} = (1-\epsilon)^n \approx 1-\epsilon n.
    \end{equation}
    Therefore, there is a $1-\epsilon n$ probability that the measurement outcome samples from a distribution that is $\sqrt{\delta_f}$ away from the ideal distribution in terms of TVD. In the worst case, we simply assume the distribution of errorneous measurements has maximum TVD from the ideal distribution, which is 1. Hence, the TVD between the real experiment distribution and the ideal distribution can be upper bounded by
    \begin{equation}
        \TVD(P_\mathrm{real}, P_\mathrm{ideal}) \leq (1-\epsilon n) \TVD(P_\mathrm{real}, P_\mathrm{ideal}) + \epsilon n = (1-\epsilon n)\sqrt{\delta_f} + \epsilon n = \sqrt{\delta_f} + O(1),
    \end{equation}
    where in the last step we use $\epsilon \ll 1/n$ and $\delta_f < 1$.
\end{proof}


\section{Relaxing the assumption that the trials are i.i.d.}
\label{supp:martingale}

Our protocol consists of $N_M$ repeated trials or experiments that are carried out by the prover and the verifier. In the preceding discussion, we have assumed that these trials are independent and identically distributed (i.i.d.), so that the accuracy of our protocol can be shown using simple large-deviation bounds, such as Hoeffding's inequality. Here we sketch how this i.i.d.\ assumption can be relaxed. In this case, the accuracy of our protocol can be shown using large-deviation bounds based on martingales, such as Azuma's inequality \cite{dubhashi2009concentration}. 

To demonstrate this, consider a protocol that estimates the expectation value of an observable $A$ by repeating an experiment (preparing a quantum state and measuring it) $N_M$ times. More complicated protocols can be handled in a similar way. For $j=1,2,\ldots,N_M$, let $F_j$ be the random variable that represents the classical measurement outcome from the $j$th repetition of the experiment. Let $F = (1/N_M) \sum_{j=1}^{N_M} F_j$ be the average of the $F_j$, which we use to estimate the expectation value of $A$. In addition, assume that the operator norm of $A$ is bounded by $\norm{A} \leq \beta$, where $\beta$ is independent of the size of the system, and hence $\abs{F_j} \leq \beta$. This assumption is satisfied for many commonly-used measurements, such as computational-basis measurements preceded by arbitrary single-qubit rotations.

In the case where the trials are i.i.d., the same quantum state $\rho$ is prepared in every trial, and the random variables $F_j$ are i.i.d.\ with expectation value $\Tr(A\rho)$. Then $F$ has expectation value $\Tr(A\rho)$, and Hoeffding's inequality implies that $F$ satisfies a Gaussian-like tail bound with width $O(\beta/\sqrt{N_M})$.

In the non-i.i.d.\ case, it is possible for the $N_M$ trials to be correlated.
Without loss of generality, we can imagine that there exists a joint state $\sigma$ on $N_M$ copies of the quantum system, and for each $j$, the random variable $F_j$ comes from measuring the reduced state on the $j$th copy of the system, which we denote $\sigma_j := \Tr_{\lbrace 1,\ldots,N_M \rbrace \setminus \lbrace j \rbrace} (\sigma)$. 

Despite these complications, it is still possible to interpret $F$ as an estimate of the expectation value of $A$ for a particular quantum state $\tau$ on a single copy of the system. This follows since each $F_j$ has expectation value $\Tr(A\sigma_j)$, and hence $F$ has expectation value $(1/N_M) \sum_{j=1}^{N_M} \Tr(A\sigma_j) = \Tr(A\tau)$, where $\tau = (1/N_M) \sum_{j=1}^{N_M} \sigma_j$. 

Furthermore, despite the fact that the random variables $F_j$ are correlated, one can still show that $F$ satisfies a Gaussian-like tail bound with width $O(\beta/\sqrt{N_M})$. Intuitively, this is because each $F_j$ can influence the value of $F$ by an amount that is bounded by $\pm \beta/N_M$. Formally, this can be shown by well-known martingale techniques (see \cite{dubhashi2009concentration}), i.e., constructing the Doob martingale $G_j = \mathbb{E}(F|F_j,\ldots,F_1)$, showing that $G_j$ has bounded differences $\abs{G_j - G_{j-1}} \leq 2\beta/N_M$, and applying Azuma's inequality.


\section{Echo for more general Hamiltonians}
\label{supp:echo_general}

In the main text, we have shown that the echo approach can be used to generate the single-step history state for a $(ZZ+Z)$-type Hamiltonian on a bipartite lattice. In this section, we show that the single-step history state can be prepared for some---though not all---other Ising-type Hamiltonians.

A $(ZZ+Z)$-type Hamiltonian is very special because its terms commute. This allows us to run the controlled-$Z$s independently and only worry about controlled-$ZZ$s. For more general non-commuting Hamiltonians, we may have to ``invert" all its terms in the echo approach. Under suitable conditions, we can do this using the following theorem.

\begin{theorem}
    If there exists an operator $P$ which is a product of single-qubit operations such that $PHP=-H$, then the single-step history state can be prepared using 2-local operations and controlled-$P$ gates.
\end{theorem}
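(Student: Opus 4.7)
The plan is to generalize the echo construction of \cref{sec:echo_trick} by replacing the bipartite operator $X_B$ with the general operator $P$. The circuit skeleton is identical: starting from $\tfrac{1}{\sqrt{2}}(\ket{0}+\ket{1})\otimes\ket{\phi_\mathrm{in}}$, I would apply a controlled-$P$ (controlled on the clock qubit), then $\exp(-\imag HT/2)$ on the system qubits, then a second controlled-$P$, then $X$ on the clock, and finally $\exp(-\imag HT/2)$ on the system. Because $P=\bigotimes_i P_i$ is a tensor product of single-qubit unitaries, the controlled-$P$ factors into $n$ commuting controlled-$P_i$ gates, each acting on the clock qubit and one system qubit; thus the only operations used are $2$-local $H$-evolution, single-qubit operations, and controlled-$P$ gates, exactly as the theorem claims.

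The key lemma is the echo identity $P \exp(-\imag HT/2) P = \exp(\imag HT/2)$. Assuming $P$ is self-inverse (which may be taken without loss of generality whenever each $P_i$ is a Hermitian involution, in particular a Pauli---the natural candidate in applications, including the $P=X_B$ of the main text), the identity follows term by term in the Taylor expansion, since $P H^k P = (PHP)^k = (-1)^k H^k$. Granted the identity, the rest of the argument is direct computation: after the first controlled-$P$ the state is $\tfrac{1}{\sqrt{2}}\bigl(\ket{0}\ket{\phi_\mathrm{in}} + \ket{1}P\ket{\phi_\mathrm{in}}\bigr)$; applying $\exp(-\imag HT/2)$ and then the second controlled-$P$ turns the $\ket{1}$ branch into $e^{\imag HT/2}\ket{\phi_\mathrm{in}}$ (by the echo identity) while leaving the $\ket{0}$ branch as $e^{-\imag HT/2}\ket{\phi_\mathrm{in}}$; flipping the clock labels with $X$ and evolving the system by one more $\exp(-\imag HT/2)$ then yields $\tfrac{1}{\sqrt{2}}\bigl(\ket{0}\ket{\phi_\mathrm{in}} + \ket{1}U\ket{\phi_\mathrm{in}}\bigr) = \ket{\psi_\mathrm{hist}}$.

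The main obstacle is largely bookkeeping: ensuring that the self-inverseness of $P$ can indeed be assumed without loss of generality. If the $P$ supplied by the hypothesis is not an involution, I would instead use controlled-$P^{-1}$ in the second invocation (still $2$-local, since each $P_i^{-1}$ is a single-qubit unitary) and derive the echo identity from $PHP^{-1}=-H$, which follows from $PHP=-H$ after absorbing a phase into $P$ so that $P^2$ is proportional to the identity on the support of $H$. For all Hamiltonians in the intended scope of the theorem---commuting Pauli Hamiltonians for which the natural anticommuting $P$ is itself a product of Paulis---this subtlety does not arise, and the proof reduces cleanly to the bipartite $X_B$ argument of the main text.
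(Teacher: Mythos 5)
Your construction is exactly the paper's: the same gate sequence $e^{-\imag HT/2}\, X_0\, (CP)\, e^{-\imag HT/2}\, (CP)$ applied to $\tfrac{1}{\sqrt2}(\ket{0}+\ket{1})\ket{\phi_\mathrm{in}}$, with the same key echo identity $P\, e^{-\imag HT/2}\, P = e^{+\imag HT/2}$ obtained term by term from $PH^kP=(-1)^kH^k$. The only difference is that you explicitly flag the involution assumption $P^2=I$ needed for that term-by-term step (the paper uses it silently, and it holds for all of its intended examples), which is a worthwhile clarification rather than a genuinely different route.
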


\begin{proof}
    We start with the initial state $(\ket{0}+\ket{1})\ket{\phi}$ and perform $CP$ before and after a half-time evolution of $H$, followed by a Pauli-$X$ on the clock qubit and a half-time evolution of $H$. The final state is
    \begin{equation}
        \begin{aligned}
            e^{-\imag HT/2}\cdot X_0 \cdot CP\cdot e^{-\imag HT/2} \cdot CP \, (\ket{0}+\ket{1})\ket{\phi}
            &= e^{-\imag HT/2} \left[\ket{1}e^{-\imag HT/2} \ket{\phi} +\ket{0} P e^{-\imag HT/2} P \ket{\phi}\right]\\
            &= e^{-\imag HT/2} \left[\ket{1}e^{-\imag HT/2} \ket{\phi} +\ket{0}  \sum_k \frac{1}{k!} P(-\imag H T/2)^k P \ket{\phi}\right]\\
            &= e^{-\imag HT/2} \left[\ket{1}e^{+\imag HT/2} \ket{\phi} +\ket{0}  \sum_k \frac{1}{k!} (+\imag H T/2)^k \ket{\phi}\right]\\
            &= e^{-\imag HT/2} \left[\ket{1}e^{-\imag HT/2} \ket{\phi} +\ket{0}  e^{+\imag HT/2}  \ket{\phi}\right]\\
            &=\ket{0}\ket{\phi} + \ket{1}e^{-\imag HT}\ket{\phi},
        \end{aligned}
    \end{equation}
which is the desired output.
\end{proof}

There are several cases in which an operator $P$ satisfying the conditions of the theorem can be constructed. For example, if the Hamiltonian consists of $ZZ$ terms on a bipartite interaction graph, then $P$ can apply an $X$ (or $Y$) operator to all qubits on one half of the bipartition. We can also handle some cases where the Hamiltonian is non-commuting, such as
\begin{equation}
    H=\sum_{(i,j)\in\mathrm{NN}} (X_i X_j + Y_i Y_j) + \sum_i Z_i
\end{equation}
acting on a bipartite lattice. Then we can split the system into two sets of qubits where all interactions are between qubits in different sets. If the operator $P$ acts with $X$ on the first set of qubits and $Y$ on the second set, then it anticommutes with each term of $H$, so it has the desired behavior.

\end{document}